\tikzset{edge/.style={draw, -stealth, thick}}
\begin{document}
\title{Group Trip Planning Query Problem with Multimodal Journey}
\author{Dildar Ali  \and Suman Banerjee \and Yamuna Prasad}
\authorrunning{Ali et al.} % abbreviated author list (for running head)
\institute{Indian Institute of Technology Jammu, J \& K-181221, India \email{\{2021rcs2009,suman.banerjee,yamuna.prasad\}@iitjammu.ac.in}}
\maketitle

\begin{abstract}
In \emph{Group Trip Planning (GTP) Query Problem}, we are given a city road network where a number of \emph{PoInt of Interest (PoI)} has been marked with their respective categories (e.g., Cafeteria, Park, Movie Theater, etc.). A group of agents want to visit one PoI from every category from their respective starting location and once finished, they want to reach their respective destinations. This problem asks which PoI from every category should be chosen so that the aggregated travel cost of the group is minimized. This problem has been studied extensively in the last decade, and several solution approaches have been proposed. However, to the best of our knowledge, none of the existing studies have considered the different modalities of the journey, which makes the problem more practical. To bridge this gap, we introduce and study the GTP Query Problem with Multimodal Journey in this paper. Along with the other inputs of the GTP Query Problem, we are also given the different modalities of the journey that are available and their respective cost. Now, the problem is not only to select the PoIs from respective categories but also to select the modality of the journey. For this problem, we have proposed an efficient solution approach, which has been analyzed to understand their time and space requirements. A large number of experiments have been conducted using real-life datasets, and the results have been reported. From the results, we observe that the PoIs and modality of journey recommended by the proposed solution approach lead to much less time and cost than the baseline methods. 
\keywords{Group Trip Planning Query \and Point of Interest \and Dynamic Programming \and Optimization} 
\end{abstract}

\section{Introduction}
In recent times, due to the advancement of wireless internet and GPS-enabled hand-holding mobile devices, capturing the location of a moving object has become easier. This leads to the generation of a large number of datasets and such datasets lead to a different domain called \emph{Spatial Databases and Spatial Data Mining} \cite{shekhar2007spatial,li2005trip}. One well-studied problem in the domain of spatial databases is the \emph{Group Trip Planning Query Problem}\cite{li2005trip,hashem2013group,ahmadi2015mixed,tabassum2017dynamic}. In this problem, we are given a road network of a city where the vertex set is constituted by the set of PoIs and the edge set is the road fragments connecting the PoIs. Also, the PoIs are classified into different categories, such as cafeterias, parks, movie theaters, restaurants, etc. A group of friends (referred to as agents in this paper) want to travel from their respective starting location to a destination location in the city, and during their journey, they want to visit one PoI from every category so that the aggregated travel cost by the group is minimized. This problem has been studied extensively in the last decade, and these can be broadly categorized into passenger-focused and transport manager-focused studies. The passenger-focused studies aim to optimize travel efficiency and experience by minimizing factors like travel time, cost, and number of transfers while considering comfort and convenience. Potthoff and Sauer \cite{potthoff2022efficient,sauer2020efficient} introduced the McTB approach, which optimizes three key criteria: arrival time, the number of public transit trips, and unrestricted transfer modes. Building on this, their HydRA algorithm \cite{potthoff2022efficient} enhances query execution efficiency, supporting faster computations in multimodal networks. Delling et al. \cite{delling2009engineering,delling2008timetable} developed efficient algorithms for multimodal routing and delays, incorporating transfer penalties and mode preferences. Further, RAPTOR, proposed by Delling et al. \cite{delling2015round}, focuses on minimizing transfer times and ensuring scalability in large public transportation networks. The transport manager-focused studies address system-wide efficiency, emphasizing operational optimization. Ceder \cite{ceder2016public} outlined methods for public transit scheduling and operations essential for integrating multiple transport modes.

\par Li et al. \cite{li2005trip} was the first to study the trip planning query problem on metric graphs and proposed several approximate solutions. Subsequently, this problem has been extended to include a group of travelers instead of a single traveler by Hashem et al. \cite{hashem2013group}. Later, Ahmadi et al. \cite{ahmadi2015mixed} proposed a mixed search (both breadth and depth) strategy and used the progressive group neighbor exploration technique. Lee and Park  \cite{lee2020collective} studied the trip planning problem to identify a typical meeting point such that the ride-sharing mechanism becomes effective. Additionally, several variants of the GTP Query Problem have been studied, such as incorporating PoI utility values \cite{barua2017weighted} and fairness criteria \cite{solanki2023fairness}. 
\par To the best of our knowledge, the GTP Query Problem has not been studied considering the presence of multiple transport mediums. This paper bridges this gap by studying the GTP Query Problem with multiple transport mediums. In particular, we make the following contributions in this paper:
\begin{itemize}
\item We introduce and study a practical variant of the GTP Query Problem considering the presence of multiple transport mediums. To the best of our knowledge, this is the first study in this direction.
\item To address this problem, we introduce an optimal journey planning algorithm and analyze it to understand its time and space requirements.  
\item A large number of experiments have been carried out to establish the effectiveness and efficiency of the proposed solution approach.
\end{itemize} 

\par The rest of this paper has been organized as follows. The background concepts and formal problem description have been described in Section \ref{Sec:BPD}. The proposed solution approach has been stated in Section \ref{Sec:PA}. Section \ref{Sec:EE} contains the experiment evaluation of the proposed solution approach. Finally, Section \ref{Sec:CFD} concludes our study and gives future research directions.

\section{Background and Problem Definition} \label{Sec:BPD}
%In this section, we describe the basic concepts related to our problem and subsequently define it formally.
\subsection{Road Network} 
In this study, we model a city road network using an undirected, weighted graph $\mathcal{G}(\mathcal{V}, \mathcal{E}, W)$, where the vertex set $\mathcal{V}=\{v_1, v_2, \ldots, v_n\}$ contains the set of PoIs in the city. The edge set $\mathcal{E}$ of $\mathcal{G}$ is constituted by the road segments joining the PoIs. Here, $W$ is the edge weight function that maps each edge to the corresponding travel cost between the two PoIs joined by the corresponding road segment, i.e., $W: \mathcal{E} \longrightarrow \mathbb{R}^{+}$. For any edge $(v_iv_j)$, its weight is denoted by $W(v_iv_j)$. As per our problem context, `vertex' and `PoI' have been used interchangeably. Similarly, `road segment ' and `edge' have been used interchangeably. For any vertex $v_i \in \mathcal{V}$, the neighbor of $v_i$ is denoted by $N(v_i)$ and defined as the other PoIs that are directly linked with $v_i$, i.e., $N(v_i)=\{v_j: (v_iv_j) \in \mathcal{E}\}$. The cardinality of $N(v_i)$ is called the degree of $v_i$. A sequence of vertices $P=<v_i,v_{i+1}, \ldots, v_j>$ is said to be a path in $\mathcal{G}$ if $(v_kv_{k+1}) \in \mathcal{E}$ for all $i \leq k \leq j-1$. For any path $P$ in $\mathcal{G}$, let $\mathcal{V}(P)$ and $\mathcal{E}(P)$ denote the set of vertices and edges that constitute the path. The weight of the path $P$ is defined as the sum of the edge weights of the edges that constitute the path, i.e., $W(P)=\underset{(v_pv_q) \in \mathcal{E}(P)}{\sum} \ W(v_pv_q)$. For any two vertices $v_i$ and $v_j$, the shortest path distance from $v_i$ to $v_j$ is denoted by $dist(v_iv_j)$. As $\mathcal{G}$ is undirected, $dist(v_iv_j)=dist(v_jv_i)$.
\subsection{Group Trip Planning Query Problem} \label{Sec:GTP}
For any positive integer $k$, $[k]$ denotes the set $\{1,2, \ldots, k\}$. In the GTP Query Problem, a set of $\ell$ many agents $\mathcal{U}=\{u_1, u_2, \ldots, u_{\ell}\}$ wants to travel from their respective source to the destination location. The source and destination location of the $i$-th agent is denoted by $v^{s}_i$ and $v^{d}_i$, respectively. All the PoIs under consideration can be classified into one of $k$ distinct categories. This can be formalized by the function $\mathcal{C}: \mathcal{V} \longrightarrow [k]$. For any $i \in [k]$, let $\mathcal{V}_{i}$ denotes the set of PoIs belongs to $i$-th category. Also, we assume that there does not exist any $i$ such that $\mathcal{V}_{i}=\emptyset$. For simplicity, we assume that the source and destination locations of the agents are also part of the city road network, and each one is a PoI in $\mathcal{G}$. Hence, $\mathcal{V}=\{v^{s}_1, v^{s}_2, \ldots, v^{s}_{\ell}\} \cup \{v^{d}_1, v^{d}_2, \ldots, v^{d}_{\ell}\} \cup \mathcal{V}_{1} \cup \mathcal{V}_{2} \cup \ldots \cup \mathcal{V}_{k}$. Now, in our problem context, we define \emph{Valid Path} in Definition \ref{Def:1}.
\begin{definition}[Valid Path] \label{Def:1}
For the agent $u_i$ a path $<v^{s}_i, v_1, v_2, \ldots, v_k, v^{d}_i>$ in $\mathcal{G}$ is said to be a valid path if $v^{s}_i$ and $v^{d}_i$ are the source and destination location of $u_i$, and for all $j \in [k]$, $v_j \in \mathcal{V}_{j}$.
\end{definition}

We observed that in a valid path, every agent visits one PoI from every category in the predetermined sequence and from the first to the $k$-th category, the path will be the same for all the agents. Consider for any $i \in [k]$, $|\mathcal{V}_{i}|=n_i$. Now, the number of valid paths will be $\underset{i \in [k]}{\prod} n_i$. Let $\mathcal{P}$ contain the common portions set of all valid paths. Now, for any valid path $p \in \mathcal{P}$, let $\mathcal{D}(p)$ denote the aggregated distance for the path. If the common path is $p=<v_1, v_2, \ldots, v_k>$ then the aggregated distance $\mathcal{D}(p)$ can be computed using Equation \ref{Eq:1}. 

\begin{equation} \label{Eq:1}
\mathcal{D}(p)= \underset{i \in [\ell]}{\sum} dist(v^{s}_i v_1) + \underset{i \in [k-1]}{\sum} dist(v_i v_{i+1}) + \underset{i \in [\ell]}{\sum} dist(v_k v^{d}_i)
\end{equation} 
% Similarly, for any valid path $p$, the aggregated cost is denoted as $\mathcal{C}(p)$ and defined using Equation \ref{Eq:path_cost}.

% \begin{equation} \label{Eq:path_cost}
% \mathcal{C}(p)= \underset{i \in [\ell]}{\sum} cost(v^{s}_i v_1) + \underset{i \in [k-1]}{\sum} cost(v_i v_{i+1}) + \underset{i \in [\ell]}{\sum} cost(v_k v^{d}_i)ACM COMPUTE
% \end{equation} 
% The GTP Query Problem asks to choose a valid path such that the total travel cost gets minimized, which has been formally stated in Definition \ref{Def:2}.

Next, we formally define the GTP Query Problem, which has been formally stated in Definition \ref{Def:2}.
\begin{definition}[GTP Query Problem] \label{Def:2}
Given a city road network $\mathcal{G}(\mathcal{V}, \mathcal{E}, W)$ which contains all the PoIs of interest and classified into different categories, the GTP Query Problem asks to select one PoI from each category such that aggregated travel cost by the group is minimized. This can be expressed using Equation \ref{Eq:2}.

\begin{equation} \label{Eq:2}
p^{*} \longleftarrow \underset{p \in \mathcal{P}}{argmin} \ \mathcal{C}(p)
\end{equation}

Here, $p^{*}=<v^{*}_1, v^{*}_2, \ldots, v^{*}_k>$ denotes the optimal path.
\end{definition}

\subsection{Multi Modal Journey}
Consider in the city of our consideration, there exists $p$ different modes of transport. In this paper, we use the terminology `\emph{vehicle}' to refer to any conveyance used in any mode of transport. For any $i \in [p]$, let $V_{i}$ denote the set of vehicles of $i$-th transport medium. Next, we state the notion of \emph{route} in Definition \ref{Def:Route}. 

\begin{definition}[Route] \label{Def:Route}
Given a city road network $\mathcal{G}(\mathcal{V}, \mathcal{E}, W)$ the route is defied as a path in the network such that there must exists at least one vehicle that covers that path in its journey.
\end{definition}
For any route $r=<v_i, v_{i+1}, \ldots, v_j>$, let $PoI(r)$ denotes the set of PoIs that the route $r$ covers. $PoI_{s}(r)$ and $PoI_{e}(r)$ denotes the starting and ending PoI of the route $r$. For any vehicle $x$ that covers the route $r$, $t^{x}_{s}(r)$ and $t^{s}_{e}(r)$ denotes the start time and end time of the vehicle $x$ for the route $r$, respectively. 
%For any PoI $v \in PoI(r)$, the arriving time and departure time of the vehicle $x$ is denoted by $t^{x}_{s}(v,r)$ and $t^{x}_{e}(v,r)$, respectively. It is meaningful to consider $t^{x}_{e}(v,r)> t^{x}_{s}(v,r)$. Also, the difference between these two quantities is denoted by $\Delta_{x}(v,r)$, i.e., $\Delta_{x}(v,r)=t^{x}_{e}(v,r) - t^{x}_{s}(v,r)$ and defined as the stop time of the vehicle $x$ at the PoI $v$ of the route $r$. We have a cost matrix $\mathbb{C}$, which is a three-dimensional matrix whose entries are as follows. $(a,b,c)$-th entry of the matrix $\mathbb{C}$ will contain the cost of moving from the PoI $v_a$ to $v_b$ using the $c$-th modality of journey. However, it is not necessary that between every pair of PoIs, there will exist transport of all modalities. In that case, the corresponding cell in the matrix will contain $\times$. This means if $\mathbb{C}(a,b,c)$ contains $\times$, there does not exist direct communication from $v_a$ to $v_b$ using the $c$-th type modality. This can be mathematically formalized as a mapping $\mathbb{C}: \mathcal{V} \times \mathcal{V} \times [p] \longrightarrow \mathbb{R}^{+} \cup \{\times\}$. 

\begin{definition}[Journey Planning] \label{Def:JP}
Given the city road network $\mathcal{G}(\mathcal{V}, \mathcal{E}, W)$ with the transport details and cost, two PoIs of $\mathcal{G}$ $v_a$ and $v_b$, and a start time $t$ the journey planning is defined as a sequence of routes along with the corresponding vehicles such that the following criteria got satisfied:
\begin{itemize}
\item The starting PoI of the first route and the ending PoI of the last route in the journey plan must be the start and destination of the journey.
\item For any two consecutive routes, the starting time of the chosen vehicle of the second route should be more than the ending time of the chosen vehicle of the first route.  
\end{itemize}
\end{definition}

%Consider for the city road network under consideration for the given two PoIs $v_a$ and $v_b$ the following set of routes $\mathcal{R}=\{r_1, r_2, \ldots, r_y\}$ will form a journey plan if the following conditions are met:
%\begin{itemize}
%\item There must exist some $i \in [p]$ such that for all $j \in [x]$ $V_{i}(r_j) \neq \emptyset$.
%\item The starting PoI of route $r_1$ and ending PoI of route $r_y$ will be $v_a$ and $v_b$, respectively. This means $PoI_{s}(r_1)=v_a$ and $PoI_{e}(r_y)=v_b$.
%\item For any two consecutive routes $r_i$ and $r_{i+1}$ where $1 \leq i \leq x-1$, $t_{s}(r_{i+1}) \geq t_{e}(r_i)$.
%\end{itemize}
In this study, we consider that for a specific modality (say $i$) for any two vehicles $x,y \in V_{i}$, such that both of them cover the route $r$, the cost of commuting between the PoIs $v_a$ and $v_b$ by either $x$ or $y$ will be the same. Now, we define the cost of a journey planning in Definition \ref{Def:Cost}.

\begin{definition}[Cost of a Journey Planning] \label{Def:Cost}
Given the transport details, the cost, and journey planning (i.e., the routes and the corresponding vehicles as stated in Definition \ref{Def:JP}), the cost of journey planning is defined as the sum of the costs of the routes that constitute the whole journey. For the journey planning $\mathcal{R}=<r_1,r_2, \ldots, r_y>$, and $V=\{i,j, \ldots, k\}$ the cost is denoted by $\mathcal{C}(\mathcal{R})$ and can be mathematically expressed as:
\begin{equation}
\mathcal{C}(\mathcal{R})= \underset{r \in \mathcal{R}}{\sum} \mathbb{C}(a,b,c)
\end{equation}
Where $v_a$ and $v_b$ are the start and end PoI of the route $r$, and in the journey planning, this route will be traveled using a vehicle of $c$-th modality.
\end{definition}
%
% Given the information on transport details and its cost, and given two PoIs, it is an important problem to decide how to commute between two given PoIs so that the cost of travel is minimized. This leads to the notion of minimum cost journey planning, which has been stated in Definition \ref{Def:Min_Cost}.    

\begin{definition}[Minimum Cost Journey Planning] \label{Def:Min_Cost}
Given a city road network $\mathcal{G}(\mathcal{V}, \mathcal{E}, W)$, the transport details (i.e., different modalities of journey with their respective routes and their associated cost) the minimum cost journey planning problem asks to choose a least cost journey planning. Mathematically, this problem can be posed as an optimization problem as follows:   
\begin{equation}
\mathcal{R}^{*} \longleftarrow \underset{\mathcal{R} \in \mathbb{R}}{argmin} \ \mathcal{C}(\mathcal{R})
\end{equation}
Here, $\mathbb{R}$ denotes the set of all possible journey planning. 
\end{definition}
Subsequently, we define the GTP Query Problem under the realm of multi-modal journey and formally define our problem.

\subsection{GTP Query Problem with Multi Modal Journey}
In this paper, we introduce a variant of the GTP Query Problem where multiple modes of transportation exist in the city under consideration. Now, it can be observed for every agent, the individual cost may be different, and this is due to the following reasons. Let, $<v^{*}_1, v^{*}_2, \ldots, v^{*}_k>$ the common path, however, for any $i$-th agent the path that it covers is $<v^{s}_{i}, v^{*}_1, v^{*}_2, \ldots, v^{*}_k, v^{d}_{i}>$. So, to commute from their respective source locations to $v^{*}_1$ and from $v^{*}_k$ to their respective destination location incur different costs. As the common portion of the trip, the modality of the journey has to be decided collectively. Now, we define the cost of commuting for the whole group in Definition \ref{Def:Group_Cost}.

\begin{definition}[The Cost of the Group]\label{Def:Group_Cost}
Given a GTP Query Problem instance, any of the transport details of the city along with the cost, the cost of commuting of the group can be defined as the sum of the following three costs:
\begin{itemize}
\item Sum of the individual costs for reaching the recommended PoI of the first category.
\item Sum of the total cost incurred by the group from the first category of PoI to the $k$-th category of PoI.
\item  Sum of the individual costs for reaching the recommended destinations PoI from the $k$-th category of PoI.
\end{itemize}
Hence, mathematically this can be posed as follows 
\begin{equation} \label{Eq:Group_Cost}
\scriptsize
\mathcal{C}_{R}(\mathcal{U})= \underset{i \in [\ell]}{\sum} \mathbb{C}(v^{s}_i, v^{*}_1,d_i) + \ell \cdot \underset{j \in [k-1]}{\sum} \mathbb{C}(v^{*}_j, v^{*}_{j+1},d_j) + \underset{s \in [\ell]}{\sum} \mathbb{C}(v^{*}_k,v^{d}_i,d_s)
\end{equation}

\end{definition}
It can be observed that in Equation \ref{Eq:Group_Cost}, a subscript $R$ has been used. This signifies this cost has been defined for a specific journey planning $R$ (for all the agents collectively). Now, based on the group cost as defined in Equation \ref{Eq:Group_Cost}, we formally define the GTP Query with Multi-Modal Journey Problem in Definition \ref{Def:Problem}.

\begin{definition}[GTP Query with Multi Modal Journey Problem] \label{Def:Problem}
Given a GTP Query Problem instance along with the transport details of the city (i.e., details of different transport medium, routes, their corresponding costs, etc.), the GTP Query with Multi-Modal Journey Problem asks to recommend one PoI from every category, and the journey plan for all the agents such that the total travel cost as defined in Equation \ref{Eq:Group_Cost} gets minimized. Mathematically, this problem can be posed as a discrete optimization problem as follows.
\begin{equation}
\mathcal{R}^{*} \longleftarrow \underset{R \in \mathbb{R}}{argmin} \ \mathcal{C}_{R}(\mathcal{U})
\end{equation}

\end{definition}

\section{Proposed Approach} \label{Sec:PA}
Given the city road network $\mathcal{G}(\mathcal{V}, \mathcal{E}, W)$, the transport details, and the set of PoIs, we propose a dynamic programming (DP) based solution approach that will return the minimum cost journey plan. The working of Algorithm \ref{Alg:optimal_journey} is as follows. In Line No. $1$ to $5$, we construct the multi-graph using the PoIs and initialize the categories of PoIs, source, destinations, and the DP dictionary. Next, in Line No. $6$ to $19$, the transition between source PoIs to the PoIs of the first intermediate category is performed. The transition between intermediate categories is performed, and a common path is selected in Line No. $20$ to $33$. Next, in Line No. $34$ to $45$, the transitions between the last intermediate category and the destinations are performed. Finally, in Line No. $46$ to $47$, the total minimum cost and the shortest paths \cite{dijkstra2022note} of each individual agent are extracted. In the intermediate paths, all the agents travel together, and the intermediate travel cost is multiplied by the number of agents $(\ell)$. The representation of PoIs in Algorithm \ref{Alg:optimal_journey} is as follows. Consider a set of $\ell$ many agents $\{v_{1}^{s}, v_{2}^{s}, \ldots v_{\ell}^{s}\}$ are the source vertices and $\{v_{1}^{d}, v_{2}^{d}, \ldots v_{\ell}^{d}\}$ are the destination vertices in the network $\mathcal{G}$ and the PoIs are categories into $k$ types and they contains $p_{1}, p_{2}, \ldots,p_{k}$ many PoIs, respectively. Now, from source vertices to the first category of PoIs, each PoI contains $g \times \ell$ many variables to store the transport medium and costs where $g$ is the number of transport mediums. So, the first category contains total $p_{1} \times g \times \ell$ many variables. It can be represented as a matrix below.

\begin{minipage}{0.4\textwidth}
\[
\mathcal{V}_{11} =
\begin{bmatrix}
\mathcal{V}_{11}^{(1,\mathcal{V}_{1}^{s})} & \mathcal{V}_{11}^{(2,\mathcal{V}_{1}^{s})} & \ldots & \mathcal{V}_{11}^{(n,\mathcal{V}_{1}^{s})} \\
\mathcal{V}_{11}^{(1,\mathcal{V}_{2}^{s})} & \mathcal{V}_{11}^{(2,\mathcal{V}_{2}^{s})} & \ldots & \mathcal{V}_{11}^{(n,\mathcal{V}_{2}^{s})} \\
\vdots & \vdots & \vdots & \vdots  \\
\mathcal{V}_{11}^{(1,\mathcal{V}_{\ell}^{s})} & \mathcal{V}_{11}^{(2,\mathcal{V}_{\ell}^{s})} & \ldots & \mathcal{V}_{11}^{(n,\mathcal{V}_{\ell}^{s})} \\
\end{bmatrix}
\]
\end{minipage}
\hfill
\begin{minipage}{0.5\textwidth}
\[
\mathcal{V}_{21} =
\begin{bmatrix}
\mathcal{V}_{21}^{(1,\mathcal{V}_{11})} & \mathcal{V}_{21}^{(2,\mathcal{V}_{11})} & \ldots & \mathcal{V}_{21}^{(n,\mathcal{V}_{11})} \\
\mathcal{V}_{21}^{(1,\mathcal{V}_{12})} & \mathcal{V}_{21}^{(2,\mathcal{V}_{12})} & \ldots & \mathcal{V}_{21}^{(n,\mathcal{V}_{12})} \\
\vdots & \vdots & \vdots & \vdots  \\
\mathcal{V}_{21}^{(1,\mathcal{V}_{1p_{1}})} & \mathcal{V}_{21}^{(2,\mathcal{V}_{1p_{1}})} & \ldots & \mathcal{V}_{21}^{(n,\mathcal{V}_{1p_{1}})} \\
\end{bmatrix}
\]
\end{minipage}

 Here, $\mathcal{V}_{11}^{(1,\mathcal{V}_{1}^{s})}$ contains the transport cost from vertex $\mathcal{V}_{1}^{s}$ to vertex $\mathcal{V}_{11}$ in the first category of PoIs via transport medium $1$. Next, from the first category to the second category of PoIs, each PoI in the second category obtains $g$ many variables. The second category contains total $p_{2} \times g$ variables, and the cost of each transport medium will be multiplied with $\ell$ as a group travels all the agents. Here, $\mathcal{V}_{21}$ is a PoI in the second category. Similarly, for all intermediate categories till the $k^{th}$ category, all PoIs contain $g$ many variables. Finally, from the last category to the destination vertices, each obtains $g$ many variables. In this case, the destination vertices cost will not be multiplied by the number of agents as their destinations differ. 

\SetKwComment{Comment}{/* }{ */}
\begin{algorithm}
\scriptsize
\caption{An Optimal Journey Planning Algorithm}
\label{Alg:optimal_journey}
\KwData{The Road Network $\mathcal{G}(\mathcal{V}, \mathcal{E}, W)$, $\ell ~\text{number of agents}$, transport details, the set of Source and Destination PoInts of Interest (PoIs) $v^{s}_i$ and $v^{d}_i$, the set of PoI categories $\mathcal{C}$, and the set of transport mediums $M$.}
\KwResult{A journey (routes and corresponding vehicles) connecting the PoIs $v^{s}_i$ and $v^{d}_i$ with minimum cost.}

\Comment{Step 1: Graph Construction and Initialization}
Construct a multi-graph $\mathcal{G}$ using transport details, with nodes $\mathcal{V}$ and edges $\mathcal{E}$\;
%Ensure graph $\mathcal{G}$ is connected by adding edges between disconnected components\;
Identify the set of categories $\mathcal{C} = \{C_1, C_2, \dots, C_k\}$ from node attributes\;
Categorize PoIs into $\mathcal{P}[c]$ for each category $c \in \mathcal{C}$, ensuring intermediate categories have a fixed number of PoIs\;
Randomly assign sources $v^{s}_i \in C_1$ and destinations $v^{d}_i \in C_k$\;

Initialize a DP dictionary $DP[c][i] \leftarrow \infty$ for minimum costs to PoIs in category $c$ from source $v^{s}_i$\;
%Initialize a dictionary $Paths[c][i]$ to store the corresponding paths\;

\Comment{Step 2: Base Case (First Category)}
$First\_Category\_Cost = \infty$, $Chosen\_PoI = \emptyset$, $Best\_Path = [~]$\;
\For{each PoI $j \in \mathcal{P}[C_1]$}{
$total\_cost = 0$;
$path = [~]$\;
\For{each source $i$ in $v_i^s$}{
\If{a path exists from $i$ to $j$}{
$path \leftarrow \text{ShortestPath}(\mathcal{G}, i, j, W)$\;
$path\_cost \leftarrow \text{ShortestPathCost}(\mathcal{G}, i, j, W)$\;
$total\_cost \leftarrow total\_cost + path\_cost$\;
$update ~path$\;
}}
\If{$total\_cost < First\_Category\_Cost$}{
$First\_Category\_Cost \leftarrow total\_cost$\;
$Chosen\_PoI \leftarrow  j$\;
$Best\_Path \leftarrow path$
}}
$DP[C1][j] \longleftarrow \{
        PoI \leftarrow Chosen\_PoI,
        cost \leftarrow First\_Category\_Cost,
        paths \leftarrow Best\_Path
    \}$

\Comment{Step 3: Transition for Intermediate Categories}
\For{each category $c \in \{2, \dots, k-1\}$}{
$min\_cost = \infty$, $Chosen\_PoI = \emptyset$, $Best\_Path = [~]$\;
\For{each PoI $j \in \mathcal{P}[C_c]$}{
\For{each PoI $i \in \mathcal{P}[C_{c-1}]$}{
\If{a path exists from $i$ to $j$}{
$path \leftarrow \text{ShortestPath}(\mathcal{G}, i, j, W)$\;
$path\_cost \leftarrow DP[c-1][j][cost] + \text{ShortestPathCost}(\mathcal{G}, i, j, W)$\;
$total\_cost \leftarrow total\_cost + path\_cost$\;
$update ~path$\;
}}
\If{$total\_cost < min\_cost$}{
$min\_cost \leftarrow total\_cost$\;
$Chosen\_PoI \leftarrow  j$\;
$Best\_Path \leftarrow path$
}}
$DP[c][j] \longleftarrow \{
        PoI \leftarrow Chosen\_PoI,
        cost \leftarrow min\_cost,
        paths \leftarrow Best\_Path
    \}$
}

\Comment{Step 4: Transition to Destinations}
$last\_PoI = DP[k-1][j][PoI]$\;
$optimal\_paths = [~]$\;
$group\_path \leftarrow DP[k-1][j][\text{path}]$\;
$shared\_cost \leftarrow DP[k-1][j][\text{cost}]$\;

$group\_trip\_cost \leftarrow shared\_cost \times \ell$\;

$individual\_cost\_sum \leftarrow 0$\;

\For{each destination $ j \in v_i^d$}{
\If{a path exists from $last\_PoI$ to $j$}{
$path \leftarrow group\_path + \text{ShortestPath}(\mathcal{G}, last\_PoI, j, W)$\;
$individual\_cost \leftarrow \text{ShortestPathCost}(\mathcal{G}, last\_PoI, j, W)$\;
$individual\_cost\_sum \leftarrow individual\_cost\_sum  +individual\_cost$\;
$optimal\_paths.append(path)$\;         
}}

\Comment{Step 5: Final Total Cost for Group}
$total\_cost\_for\_all\_agents \leftarrow group\_trip\_cost + individual\_cost\_sum$\;

\Return $total\_cost\_for\_all\_agents, optimal\_paths$\;
\end{algorithm}
%%%%%%%%%%%%%% Optimal Journey Planing Algorithm Ends %%%%%%%%%%%%%%%%
\paragraph{\textbf{Complexity Analysis of Algorithm \ref{Alg:optimal_journey}}} 
\par Now, we analyze the time and space requirements of Algorithm \ref{Alg:optimal_journey}. In-Line No. $1$, constructing the multi-graph and ensuring connectivity by adding edges between disconnected components takes $\mathcal{O}(n+m)$. Next, in Line No. $2$ to $3$ scanning node attributes to classify PoIs into $k$ categories takes $\mathcal{O}(n)$ and in Line No. $4$ randomly assigning source $v^{s}_i$ and destination $v^{d}_i$ will take $\mathcal{O}(1)$ time. Initializing the DP dictionary in Line No. $5$ will take $\mathcal{O}(n)$, and in Line No. $6$, initialization of variables will take $\mathcal{O}(1)$ time. So, Line No. $1$ to $6$ will take $\mathcal{O}(n + m)$ time. Next, in Line No. $7$ to $19$ for each PoI $j \in \mathcal{P}[C_1]$ and each source $i \in v^{s}_i$, computing shortest path (using Dijkstra) will take $\mathcal{O}(n \log n +  m \log n)$ and the total time requirement will be $\mathcal{O}(\ell \cdot p_{1} \cdot n \log n + \ell \cdot p_{1} \cdot m \log n)$, where $\ell$ and $p_{1}$ are the number of agents and number of PoIs in $\mathcal{P}[C1]$, respectively. In Line No. $20$ to $33$, for each PoI $j \in \mathcal{P}[C_c]$ and each PoI $i \in \mathcal{P}[C_{c-1}]$ computing shortest path will take $\mathcal{O}(n \log n + m \log n)$. Considering $p_{c}$ is the number of PoIs in $\mathcal{P}[C_{c}]$ and $p_{c-1}$ is the number of PoIs in $\mathcal{P}[C_{c-1}]$, the time requirements will be $\mathcal{O}(p_{c} \cdot p_{c-1} \cdot (n \log n + m \log n))$. Now, summing all the intermediate category $(k-2)$, the total time requirements will be $\mathcal{O}((k-2) \cdot p \cdot p \cdot (n+m)\log n)$ i.e., $\mathcal{O}(k \cdot p^{2} \cdot (n+m)\log n)$. Next, in Line No. $34$ to $47$ time requirements will be $\mathcal{O}(\ell \cdot p_{k} \cdot (n+m) \log n)$, where $p_{k}$ is the number of PoIs in the last category. So, the overall time requirement for Algorithm \ref{Alg:optimal_journey} will be $\mathcal{O}(k \cdot p^{2} \cdot (n+m)\log n)$.
\par Now, the graph $\mathcal{G}$ uses $\mathcal{O}(n + m)$ space, and the DP table stores costs for all PoIs in all categories will be $\mathcal{O}(k \cdot p)$. Next, each path required $\mathcal{O}(p \cdot s)$ space, where $s$ is the average length of the paths. Hence, the overall space requirements of Algorithm \ref{Alg:optimal_journey} will be $\mathcal{O}(n + m + k \cdot p + p \cdot s)$ i.e., $\mathcal{O}(n + m + k \cdot p)$.
%\begin{theorem}
%The time and space requirements of Algorithm \ref{Alg:optimal_journey} will be of $\mathcal{O}(k \cdot p^{2} \cdot (n+m)\log n)$ and $\mathcal{O}(n + m + k \cdot p)$, respectively.
%\end{theorem}

\paragraph{\textbf{An Illustrative Example}}
 We consider the following scenario to demonstrate the working of the Algorithm \ref{Alg:optimal_journey}. Let, we have a road network $\mathcal{G}(\mathcal{V}, \mathcal{E}, W)$ consists of 10 Point of Interest (PoIs) $v_1, v_2, \dots, v_{10}$. The PoIs are categorized into five categories $\mathcal{C} = \{C_1, C_2, C_3, C_4, C_5\}$, where $C_1 = \{v_1, v_2\}$ (sources), $C_2 = \{v_3, v_4\}$ (first category), $C_3 = \{v_5, v_6\}$ (second category), $C_4 = \{v_7, v_8\}$ (third category), $C_5 = \{v_9, v_{10}\}$ (destinations). The edge weights $W$ represent the travel costs, and each edge is labeled with multiple transport mediums like Bus (B), Car (C), Train (T), and Ferry (F). Figure \ref{fig:example_graph} shows the road network and categories. Next, we initialize $DP[C_1][v_1] = 0$, $DP[C_1][v_2] = 0$ and set all other $DP[c][v_i] = \infty$. In the transition for category $C_2$, for each PoI in $C_2$, compute the minimum Cost from PoIs in $C_1$ considering all transport mediums. For example, for $v_3$, the Cost from $v_1$ via Bus is $5$, and via Train is $7$. The Cost from $v_2$ via Car is $8$, and via Ferry is $12$. Next, for $v_4$, cost from $v_1$ via Car is $3$, via Bus is $6$. The Cost from $v_2$ via Ferry is $10$, and via Train is $9$. So, minimum cost if $v_{1}$ and $v_{2}$ want to select $v_{3}$ is $(8+5) = 13$ and if $v_{4}$ is chosen then cost will be $(10+3)= 13$. The updated value in DP table will be $DP[C_2][v_3] = 13$ and $DP[C_2][v_4] = 13$. Now, from the second category to the third category, the minimum Cost for $v_{3}$ to $v_{5}$ is $4$ via Train and $v_{3}$ to $v_{6}$ via Bus is $6$. Similarly, the cost for $v_{4}$ to $v_{5}$ and $v_{6}$ are $7$ and $5$, respectively. So, updated value is $DP[C_3][v_5] = 17$ and $DP[C_3][v_6] = 18$. For the fourth category the updated values are $DP[C_4][v_7] = 20$  and $DP[C_4][v_8] = 21$. Finally in the last category $DP[C_5][v_9] = 25$ and $DP[C_5][v_{10}] = 24$. So, in the intermediate category, two possible minimum cost paths are $v_3 \to v_5 \to v_7$ and $v_4 \to v_6 \to v_7$. Both the paths take Cost = $20$, and these intermediate paths are common for all the agents. Hence, for agent $1$ who starts journey from $v_{1}$ to $v_{10}$ the possible path is $v_{1} \to v_3 \to v_5 \to v_7 \to v_{10}$ and the cost is $29$. Similarly for agent $2$, possible path will be  $v_{2} \to v_3 \to v_5 \to v_7 \to v_{9}$ and travel cost is $31$. So, the total group trip cost will be $(20+(5+4+6+5)) = 40$. 

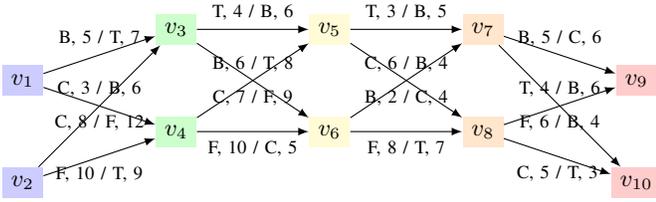
\begin{figure}[h]
\centering
\begin{tikzpicture}[scale=0.68, every node/.style={font=\small}]
    % Nodes
    \node[fill=blue!20] (v1) at (0, 4) {$v_1$};
    \node[fill=blue!20] (v2) at (0, 2) {$v_2$};
    \node[fill=green!20] (v3) at (3, 5) {$v_3$};
    \node[fill=green!20] (v4) at (3, 3) {$v_4$};
    \node[fill=yellow!20] (v5) at (6, 5) {$v_5$};
    \node[fill=yellow!20] (v6) at (6, 3) {$v_6$};
    \node[fill=orange!20] (v7) at (9, 5) {$v_7$};
    \node[fill=orange!20] (v8) at (9, 3) {$v_8$};
    \node[fill=red!20] (v9) at (12, 4) {$v_9$};
    \node[fill=red!20] (v10) at (12, 2) {$v_{10}$};

    % Edges with multiple mediums
    \draw[-latex] (v1) -- (v3) node[midway, above, font=\scriptsize] {B, 5 / T, 7};
    \draw[-latex] (v1) -- (v4) node[midway, below, font=\scriptsize] {C, 8 / F, 12};
    \draw[-latex] (v2) -- (v3) node[midway, above, font=\scriptsize] {C, 3 / B, 6};
    \draw[-latex] (v2) -- (v4) node[midway, below, font=\scriptsize] {F, 10 / T, 9};
    \draw[-latex] (v3) -- (v5) node[midway, above, font=\scriptsize] {T, 4 / B, 6};
    \draw[-latex] (v3) -- (v6) node[midway, below, font=\scriptsize] {C, 7 / F, 9};
    \draw[-latex] (v4) -- (v5) node[midway, above, font=\scriptsize] {B, 6 / T, 8};
    \draw[-latex] (v4) -- (v6) node[midway, below, font=\scriptsize] {F, 10 / C, 5};
    \draw[-latex] (v5) -- (v7) node[midway, above, font=\scriptsize] {T, 3 / B, 5};
    \draw[-latex] (v5) -- (v8) node[midway, below, font=\scriptsize] {B, 2 / C, 4};
    \draw[-latex] (v6) -- (v7) node[midway, above, font=\scriptsize] {C, 6 / B, 4};
    \draw[-latex] (v6) -- (v8) node[midway, below, font=\scriptsize] {F, 8 / T, 7};
    \draw[-latex] (v7) -- (v9) node[midway, above, font=\scriptsize] {B, 5 / C, 6};
    \draw[-latex] (v7) -- (v10) node[midway, above, font=\scriptsize] {T, 4 / B, 6};
    \draw[-latex] (v8) -- (v10) node[midway, below, font=\scriptsize] {C, 5 / T, 3};
    \draw[-latex] (v8) -- (v9) node[midway, below, font=\scriptsize] {F, 6 / B, 4};
\end{tikzpicture}
\caption{Road network with categories $C_1$ to $C_5$.}
\label{fig:example_graph}
\end{figure}

\begin{theorem}
The Algorithm \ref{Alg:optimal_journey} always terminates within a finite number of steps for any valid input graph $\mathcal{G}$. 
\end{theorem}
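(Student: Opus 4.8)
The plan is to prove termination by a structural walk through the control flow of Algorithm~\ref{Alg:optimal_journey}, showing that it contains no recursion and no unbounded \texttt{while} loop, and that every \texttt{for} loop ranges over a finite index set whose size is fixed the moment the input is read. First I would record the finiteness facts supplied by a valid input: $\mathcal{G}(\mathcal{V},\mathcal{E},W)$ has $|\mathcal{V}|=n<\infty$ and $|\mathcal{E}|=m<\infty$; the number of categories $k$ is finite and each class $\mathcal{P}[c]$ is a finite subset of $\mathcal{V}$; and the number of agents $\ell$ and the number of transport mediums $|M|$ are finite constants. The multi-graph built in Line~1 augments $\mathcal{G}$ with at most $\binom{n}{2}$ extra edges to restore connectivity, so it is still a finite graph.

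Next I would argue block by block. Lines~1--6 (Step~1) are a constant number of assignments together with one pass over $\mathcal{V}$ and one over $\mathcal{C}$, hence they halt after $\mathcal{O}(n+m)$ elementary steps. In Step~2 (Lines~7--19) the outer \texttt{for} ranges over the finite set $\mathcal{P}[C_1]$ and the inner \texttt{for} over the $\ell$ sources; neither index set is altered inside the body, so the double loop executes finitely many iterations, each doing a constant amount of work plus one call to \texttt{ShortestPath}/\texttt{ShortestPathCost}. For that sub-step I would invoke the standard fact that Dijkstra's algorithm \cite{dijkstra2022note} on a finite graph with non-negative edge weights terminates after $\mathcal{O}((n+m)\log n)$ operations, since each vertex is extracted from the priority queue exactly once. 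The same reasoning applies verbatim to Step~3 (Lines~20--33): the outermost \texttt{for} is over the finite set $\{2,\dots,k-1\}$, and the two inner loops are over the finite sets $\mathcal{P}[C_c]$ and $\mathcal{P}[C_{c-1}]$, with a terminating shortest-path call in the body. Step~4 (Lines~34--45) is a single \texttt{for} over the $\ell$ destinations, one shortest-path call per iteration, and Step~5 (Lines~46--47) is a constant number of arithmetic operations followed by \textbf{return}.

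Combining these observations, the total count of elementary operations is bounded by a finite function of $n,m,k,p,\ell$ --- concretely by the $\mathcal{O}(k\cdot p^{2}\cdot(n+m)\log n)$ bound established in the complexity analysis --- and is in particular finite, so the algorithm halts. The only genuinely delicate point, and the one I would spell out with care, is the termination of the shortest-path subroutine: this needs the edge weights to be non-negative (guaranteed by $W:\mathcal{E}\longrightarrow\mathbb{R}^{+}$) and the graph to be finite, after which the loop invariant ``each vertex leaves the priority queue at most once'' forces finiteness; everything else is the routine observation that a finite nesting of \texttt{for} loops over fixed finite ranges, with no \texttt{goto} and no recursive call, runs for finitely many steps.
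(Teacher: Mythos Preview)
Your proposal is correct and follows essentially the same strategy as the paper's own proof: both walk through the algorithm stage by stage, bound the number of iterations of each loop by a finite function of the input sizes, and appeal to the termination of Dijkstra's algorithm on a finite graph for the shortest-path subroutine. Your version is a bit more explicit about the control-flow structure (no recursion, no \texttt{while}, only \texttt{for} loops over fixed finite ranges) and about the non-negativity hypothesis needed for Dijkstra, but the underlying argument is the same.
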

\begin{proof}
Let $|\mathcal{V}|$ be the number of PoIs and $|\mathcal{E}|$ be the number of edges in the graph $\mathcal{G}$. Each PoI and edge from the input is processed once to construct $\mathcal{G}$, which takes $\mathcal{O}(|\mathcal{V}| + |\mathcal{E})$. Let $\mathcal{C}_1, \mathcal{C}_2, \dots, \mathcal{C}_p$ denote the $p$ connected components of the graph. At most $p-1$ edges are added to connect these components, which is a finite operation. For each category $\mathcal{C} = \{C_1, C_2, \dots, C_k\}$, the algorithm iterates through all pairs of PoIs $i \in C_{c-1}, j \in C_c$. Let $|C_c|$ denote the number of PoIs in category $C_c$. Then the pairwise iterations across all categories take at most $\sum_{c=2}^k |C_{c-1}| \cdot |C_c| \leq |\mathcal{V}|^2$. Next, the shortest path computations are performed using Dijkstra’s algorithm \cite{dijkstra2022note} with complexity \(O(|\mathcal{E}| \cdot \log|\mathcal{V}|)\) for each source-target pair and there are \(|C_{c-1}| \cdot |C_c|\) such pairs, the total complexity for path computations is $ \mathcal{O}(|\mathcal{V}|^2 \cdot |\mathcal{E}| \cdot \log|\mathcal{V}|)$. The dynamic programming updates for costs and paths iterate over \(|C_c|\) PoIs per category, taking \(O(|\mathcal{V}| \cdot |\mathcal{C}|)\). Each algorithm stage involves a finite number of operations. Thus, the Algorithm \ref{Alg:optimal_journey} terminates.
\end{proof}

\begin{theorem}
If the input graph $\mathcal{G}$ is connected the Algorithm \ref{Alg:optimal_journey} guarantees a feasible journey from sources $v^{s}_i$ to destinations $v^{d}_i$. 
\end{theorem}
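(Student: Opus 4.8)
The plan is to argue, by induction on the category index, that every shortest-path subroutine call made by Algorithm~\ref{Alg:optimal_journey} is well defined, that consequently the DP table is populated with finite costs and explicit sub-paths at every stage, and that the concatenation produced in Steps~2--4 is a valid path in the sense of Definition~\ref{Def:1} for each agent, hence a feasible journey in the sense of Definition~\ref{Def:JP}. First I would invoke connectivity of $\mathcal{G}$: for every ordered pair $(v_a,v_b)$ of vertices there is at least one $v_a$--$v_b$ path, so $dist(v_av_b)$ is finite and $\text{ShortestPath}(\mathcal{G},v_a,v_b,W)$ returns a well-defined path. Hence every guard of the form ``if a path exists from $i$ to $j$'' (in Steps~2, 3 and 4) evaluates to true, and no branch that would leave a DP entry undefined is ever skipped.

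For the base case (Step~2), with the first category $C_1$ fixed and each source $v^{s}_i$, the call to $\text{ShortestPath}$ succeeds, so $total\_cost$ is a finite sum of $\ell$ shortest-path costs and at the end of the loop $DP[C_1][\cdot]$ holds a finite cost together with explicit source-to-$C_1$ paths; in particular some PoI of $C_1$ is selected with finite cost. For the inductive step (Step~3), assuming $DP[C_{c-1}][\cdot]$ stores a finite cost and a common sub-path ending at some PoI of $C_{c-1}$, every $\text{ShortestPath}$ call in the transition to $C_c$ succeeds, so $DP[C_c][\cdot]$ ends up finite together with a sub-path that ends at a PoI of $C_c$ and passes through exactly one PoI of each of $C_1,\dots,C_c$ in the prescribed order. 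After $c=k-1$ we therefore have a finite-cost common sub-path ending at some intermediate PoI. Finally, Step~4 appends to this common sub-path a shortest path from its last PoI to each destination $v^{d}_i$; since each such call succeeds, agent $i$ obtains a path of the form $\langle v^{s}_i, v^{*}_1,\dots,v^{*}_{k-1}, v^{d}_i\rangle$ visiting one PoI of every category in order --- a valid path by Definition~\ref{Def:1} --- and the routine returns these paths (in $optimal\_paths$) together with a finite $total\_cost\_for\_all\_agents$. Because consecutive route segments share an endpoint, the resulting sequence of segments is a legitimate journey plan per Definition~\ref{Def:JP}, establishing feasibility.

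The main obstacle I expect is bridging the gap between the table being indexed by the target PoI $j$ and the single ``chosen PoI'' bookkeeping used inside each loop: one must verify that the sub-path stored when an intermediate PoI is selected is indeed the sub-path whose final vertex is that PoI, so that the concatenation in Step~4 is an uninterrupted path rather than two fragments that fail to meet. A secondary subtlety is the timing clause in the second bullet of Definition~\ref{Def:JP}: since the algorithm reasons with shortest-path costs rather than explicit vehicle schedules, feasibility at this level should be read as the existence of a valid route-and-vehicle sequence, and I would state explicitly the mild assumption --- namely that on each used route some vehicle departs after the arrival time on the previous route --- under which the cost-level journey lifts to a schedule-level journey. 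Discharging these two points carefully is precisely what turns the connectivity observation into a complete proof.
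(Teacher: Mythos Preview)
Your proposal is correct and follows essentially the same route as the paper: both argue by induction on the category index that connectivity makes every $\text{ShortestPath}$ call succeed, so the DP table is populated with finite costs and explicit sub-paths at each stage, and the final concatenation yields a path from each $v^{s}_i$ to $v^{d}_i$. The paper's version additionally (and somewhat redundantly, given the hypothesis) discusses how disconnected inputs are repaired by inserting at most $p-1$ inter-component edges before the induction begins; conversely, your write-up is more careful than the paper's in flagging the bookkeeping issue (that the stored sub-path must terminate at the selected PoI) and the scheduling clause of Definition~\ref{Def:JP}, neither of which the paper's proof addresses.
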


\begin{proof}
Let $\mathcal{G} = (\mathcal{V}, \mathcal{E}, W)$ be the input graph. If $\mathcal{G}$ is not connected, the algorithm identifies connected components $\mathcal{C}_1, \mathcal{C}_2, \dots, \mathcal{C}_p$, which ensures $\bigcup_{i=1}^p \mathcal{C}_i = \mathcal{V}, ~ \mathcal{C}_i \cap \mathcal{C}_j = \emptyset \text{ for } i \neq j $. For each pair of disconnected components, an edge is added between a PoI $v_i \in \mathcal{C}_i\) and $v_j \in \mathcal{C}_j$. At most $p-1$ edges are added, ensuring that the graph becomes connected, i.e., $\text{diameter}(\mathcal{G}) < \infty$. For each source $v^{s}_i$, the algorithm computes the shortest path to every PoI in $C_1$ using Dijkstra's algorithm, which guarantees $\exists \text{path}(v^{s}_i, v_j)~ \forall v_j \in C_1 $. By induction on the categories, we assume a path exists from $v^{s}_i$ to any $v_j \in C_{c-1}$. The algorithm computes the shortest path from $v_j$ to all $v_k \in C_c$, ensuring $\exists \text{path}(v^{s}_i, v_k)~ \forall v_k \in C_c$. In the final PoI to destinations, the algorithm computes the shortest path from the last PoI in $C_k$ to each destination $v^{d}_i$, ensuring $\exists \text{path}(v^{s}_i, v^{d}_i)~ \forall v^{d}_i \in \mathcal{V}$. Thus, a feasible journey exists for all source-destination pairs.
\end{proof}

\begin{theorem}
The Algorithm \ref{Alg:optimal_journey} computes a journey with the minimum total cost between sources $v^{s}_i$ and destinations $v^{d}_i$. 
\end{theorem}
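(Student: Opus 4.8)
The plan is to prove optimality by the standard dynamic-programming route: exhibit an optimal-substructure decomposition of the group cost in Equation~\ref{Eq:Group_Cost}, write down the Bellman recurrence that Algorithm~\ref{Alg:optimal_journey} implements, and then argue by induction on the category index that the DP table stores exactly the optimal partial-journey costs; the optimal journey itself is recovered from the predecessor attaining each minimum.

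First I would fix notation for the subproblems. For a category index $c \in [k]$ and a PoI $v \in \mathcal{V}_c$, let $f(c,v)$ be the minimum, over all choices $w_1 \in \mathcal{V}_1, \dots, w_{c-1} \in \mathcal{V}_{c-1}$ with $w_c = v$ and over all modality assignments to the segments, of $\sum_{i\in[\ell]} \mathbb{C}(v^s_i, w_1, \cdot) + \ell \cdot \sum_{j\in[c-1]} \mathbb{C}(w_j, w_{j+1}, \cdot)$; that is, the cost of a partial journey that brings every agent from its source through one PoI of each of the first $c$ categories, ending collectively at $v$, with the source-to-first-category legs charged individually and the inter-category legs charged $\ell$ times. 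With this definition the optimal total group cost equals $\min_{v\in\mathcal{V}_k}\bigl(f(k,v) + \sum_{i\in[\ell]} \mathbb{C}(v, v^d_i,\cdot)\bigr)$, which is precisely the quantity assembled in Steps~4--5.

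Second I would establish the recurrence. Since $\mathbb{C}$ is additive over segments and the modality of each segment may be chosen independently, the cheapest traversal of a fixed segment $(v_a,v_b)$ uses the modality minimizing $\mathbb{C}(a,b,\cdot)$; write $\widehat{c}(v_a,v_b)$ for that value, which is what $\text{ShortestPathCost}$ returns on the multigraph once the per-segment minimization over modalities is folded into the edge weights. Then $f(1,v) = \sum_{i\in[\ell]} \widehat{c}(v^s_i, v)$ and, for $c\ge 2$, $f(c,v) = \min_{u\in\mathcal{V}_{c-1}}\bigl(f(c-1,u) + \ell\cdot \widehat{c}(u,v)\bigr)$. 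The inequality $\le$ follows by plugging in any predecessor; the reverse inequality is the exchange argument: an optimal partial journey ending at $v$ restricts to a feasible partial journey ending at some $u\in\mathcal{V}_{c-1}$ whose cost is at least $f(c-1,u)$, and removing the shared last segment leaves that cost plus $\ell\cdot\widehat{c}(u,v)$. An induction on $c$ then shows that after the algorithm processes category $c$, the DP dictionary holds $f(c,v)$ for every $v\in\mathcal{V}_c$, and Step~4 adds the individual last-leg shortest-path costs without the $\ell$ multiplier, so the returned value is the claimed minimum.

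The main obstacle I anticipate is less the arithmetic of the recurrence than verifying that the pseudocode faithfully realizes it: as written, the DP dictionary appears to collapse each category to a single chosen PoI rather than retaining a value for every $v\in\mathcal{V}_c$, and the inner loops reuse a running $total\_cost$; to make the theorem go through I would either read the dictionary as storing the full vector $f(c,\cdot)$ (treating the code as shorthand) or carry the loop invariant ``after category $c$, $DP[c][v] = f(c,v)$ for all $v\in\mathcal{V}_c$'' and check it is preserved line by line, in particular that the intermediate-category update indexes $DP[c-1]$ at the predecessor $u$ rather than at $v$. A secondary point to pin down is that the modality-minimized per-segment costs $\widehat{c}$ are exactly what the shortest-path subroutine computes; here the assumption that any two vehicles of the same modality charge the same fare between two PoIs is what lets the modality choice be made segment-by-segment and absorbed into the edge weights.
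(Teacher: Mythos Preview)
Your proposal is correct and follows essentially the same route as the paper: define DP subproblems $f(c,v)$ (the paper's $DP[c][j]$), establish the Bellman recurrence with the base case summing individual source-to-first-category costs, and prove by induction on the category index that the table stores optimal partial costs, then add the individual last-leg costs. You are in fact more careful than the paper on two points it glosses over---the explicit $\ell$ multiplier on the shared inter-category segments and the absorption of the per-segment modality choice into $\widehat{c}$---and your worry about whether the pseudocode truly retains a value for every $v\in\mathcal{V}_c$ is legitimate, though the paper's own proof simply assumes (as you propose to) that $DP[c][j]$ is stored for all $j$.
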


\begin{proof}
Let $\mathcal{G} = (\mathcal{V}, \mathcal{E}, W)$ be the input graph and $DP[c][j]$ represent the minimum cost to reach PoI $j \in C_c$ from any source $v^{s}_i$. The recurrence relation for $DP$ is $DP[c][j] = \min_{i \in C_{c-1}} \big(DP[c-1][i] + \text{ShortestPathCost}(i, j)\big)$. For the base case $(c=1)$, $DP[1][j] = \sum_{i \in v^{s}_i} \text{ShortestPathCost}(i, j)$. Now, applying mathematical induction on categories, we assume $DP[c-1][i]$ stores the minimum cost to reach $i \in C_{c-1}$. The Algorithm \ref{Alg:optimal_journey} computes the cost for $j \in C_c$ by evaluating all transitions $i \to j$ and taking the minimum, ensuring optimality $DP[c][j] = \min_{i \in C_{c-1}} (DP[c-1][i] + W(i, j))$. By mathematical induction, $DP[c][j]$ stores the minimum cost for all paths ending at $j \in C_c$. Now, for final transitions to the destinations, for each destination $v^{d}_i$, the total cost is$    \text{TotalCost} = \sum_{j \in C_k} DP[k][j] + \sum_{j \in v^{d}_i} W(j, v^{d}_i)$. Since all pairwise costs are minimized in the DP updates, the final cost is also minimized. Thus, the algorithm computes the globally optimal journey.
\end{proof}

\section{Experimental Evaluation} \label{Sec:EE}
%In this section, we describe the experimental evaluation of the proposed solution approach. Initially, we start by describing the datasets used in our experiments. 
\paragraph{\textbf{Dataset Description.}}
We evaluate our proposed approach with two different real-world datasets. First, the networks of Switzerland, which are previously used by Sauer et al. \cite{potthoff2022efficient, sauer2020efficient}, and the second is the road network of Helsinki\footnote{\url{https://welcome.hel.fi/}} city of Finland. The public transit network for Switzerland and Finland are collected from GTFS feed\footnote{\url{https://gtfs.geops.ch/}}. The Switzerland network covers the timetable of two business days, and the road networks were obtained from OpenStreetMap\footnote{\url{https://download.geofabrik.de/}}. The Finland dataset includes travel time and distance between all SYKE (Finnish Environment Institute), calculated for walking, cycling, public transport, and car travel \cite{tenkanen2020longitudinal}. In Switzerland network we merged seven different feeds `Bus', `Train', `Ferry', `Funicular', `Gondola', `Subway' and `Tram' into a single transport network. We categorize the PoIs into ten distinct categories `Train Station', `Public Square', `City Center', `Bridge', `School', `Park', `Bus Stop', `Airport', `Healthcare Facility' and `Hotel'. Further, we compute the travel cost for each medium of transport using (Travel Cost = Base Fare + (Cost per minute * Travel Time) + (Cost per meter * Travel Distance)) the information provided in Table \ref{tab:transport_fares}. In the case of the Finland dataset, there are complete transport road networks for the city of Helsinki. However, in our problem context, we have taken a small portion of the road networks that contain $1100$ unique PoIs. One point to be highlighted is the fares given in Table \ref{tab:transport_fares}, \ref{tab:transport_costs_helsinki} are the approximate fares we assumed for our problem context. An overview of the networks is given in the Table \ref{Table:Dataset}.
\vspace{-0.3in}
\begin{table}[h!]
\scriptsize
\caption{Description of Switzerland Dataset}
\centering
\tiny
\begin{tabular}{|c|c|c|c|}
\hline
\textbf{Transport Medium} & \makecell{\textbf{Base Fare} \\ \textbf{(CHF)}} & \makecell{\textbf{Cost per Meter} \\ \textbf{(CHF)}} & \makecell{\textbf{Cost per Minute} \\ \textbf{(CHF)}} \\ \hline
Bus & 2.50 - 4.00 & 0.01 - 0.03 & 0.05 - 0.10 \\ \hline
Tram & 2.50 - 4.00 & 0.01 - 0.03 & 0.05 - 0.10 \\ \hline
Train & $\sim$5.00 & 0.03 - 0.05 & 0.10 - 0.15 \\ \hline
Ferry & 5.00 - 10.00 & 0.05 - 0.10 & 0.15 - 0.25 \\ \hline
Funicular & 1.30 - 5.00 & 0.02 - 0.04 & 0.10 - 0.15 \\ \hline
Gondola & 5.00 - 15.00 & 0.05 - 0.15 & 0.20 - 0.50 \\ \hline
Subway & 2.50 - 4.00 & 0.01 - 0.03 & 0.05 - 0.10 \\ \hline
\end{tabular}
\label{tab:transport_fares}
\end{table}
\vspace{-0.4in}

\begin{table}[h!]
\centering
\begin{minipage}[t]{0.55\textwidth}
\tiny
\centering
\caption{Description of Finland Dataset}
\label{tab:transport_costs_helsinki}
\begin{tabular}{|l|c|c|c|}
\hline
\textbf{Transport Medium} & \makecell{\textbf{Base Fare} \\ \textbf{(EUR)}} & \makecell{\textbf{Cost per Meter} \\ \textbf{(EUR)}} & \makecell{\textbf{Cost per Minute} \\ \textbf{(EUR)}} \\ \hline
Bike & 0.00 - 5.00 & 0.00 - 0.10 & 0.05 - 0.10 \\ \hline
Public Transport & 3.20 & 0.03 - 0.05 & 0.05 - 0.10 \\ \hline
Private Car (Taxi) & 5.90 & 0.01 - 0.05 & 0.74 \\ \hline
\end{tabular}
\end{minipage}\hfill
\begin{minipage}[t]{0.4\textwidth}
\tiny
\centering
\caption{Dataset Description}
\label{Table:Dataset}
\begin{tabular}{ccc}
\hline
\textbf{Attributes} & \textbf{Switzerland} & \textbf{Finland}  \\ \hline
Stops  & 44557  &    1100 \\ 
Routes & 168294  & 4840000\\
Vertices & 1310  & 1100 \\
Edges & 11370  &  604950\\
\hline
\end{tabular}
\end{minipage}
\end{table}

\paragraph{\textbf{Environment and Key Parameter Setup.}} \label{Sec:Env_Setup} The proposed and baseline methods are implemented in Python using the Jupyter Notebook Platform. All the experiments are conducted in a Ubuntu-operated desktop system with 64 GB RAM and an Xeon(R) 3.5 GHz processor. Next, we vary the number of agents $(|\mathcal{U}|)$ by $5,10,20,50,100$ to show the effectiveness and efficiency of the proposed solution approach. We vary the number of intermediate PoI categories ($5$, $10$, $20$) to evaluate scalability, defaulting to $10$ for reporting. All experiments are averaged over three runs.

\paragraph{\textbf{Baseline Methodologies.}} \label{Sec:Baseline}
\begin{itemize}
\item \textbf{Random PoI and Random Medium (RPRM).}
In this approach, from source to destination via intermediate categories of PoI, the PoIs are selected randomly, and the transport medium between two PoIs is chosen randomly.
\item \textbf{Random PoI and Random Medium (RPRM).}
In this approach, from source to destination via intermediate categories of PoI, the PoIs are selected randomly, and the transport medium between two PoIs is chosen randomly.
\item \textbf{Random PoI and Cheapest Medium (RPCM).}
This approach selects PoIs randomly from source to destination via intermediate categories of PoI; however, consider the cheapest transport medium to visit one PoI to others.
\item \textbf{Nearest Neighbor PoI and Cheapest Medium (NNCM).}
This approach considered the nearest PoI in the first category of PoIs from the source using the cheapest transport medium, and from onwards, it selects one PoI from each category, considering the same till the destination.
\end{itemize}

\paragraph{\textbf{Goals of our Experiments.}} \label{Sec:Research_Questions}
The following research questions (RQ) are our focus in this study.
\begin{itemize}
\item \textbf{RQ1}: Varying agents, how do the travel cost and computational time change?
\item \textbf{RQ2}: Varying agents, how does the usage of transport medium change?
\item \textbf{RQ3}: Varying number of PoI in each category, how do the computational time and transport medium change?
\end{itemize}

\paragraph{\textbf{Experimental Results and Discussions.}}
In this section, we will address the research questions posed in Section \ref{Sec:Research_Questions} and discuss the experimental results.
%% \vspace{-0.25cm}
\begin{figure*}[h!]
\centering
\begin{tabular}{cccc}
\includegraphics[width=0.25\linewidth]{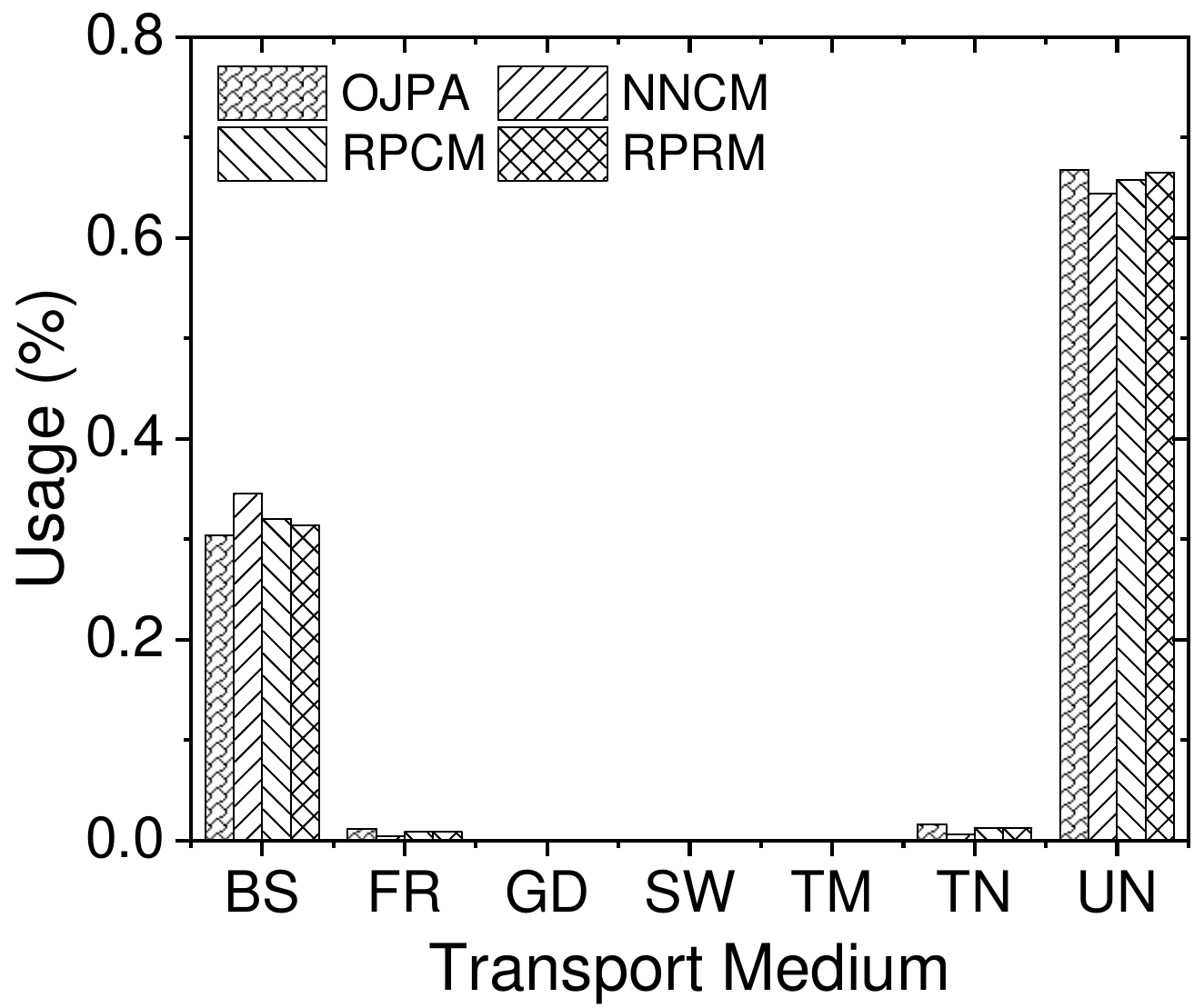}\hspace{-0.2em} & 
\includegraphics[width=0.25\linewidth]{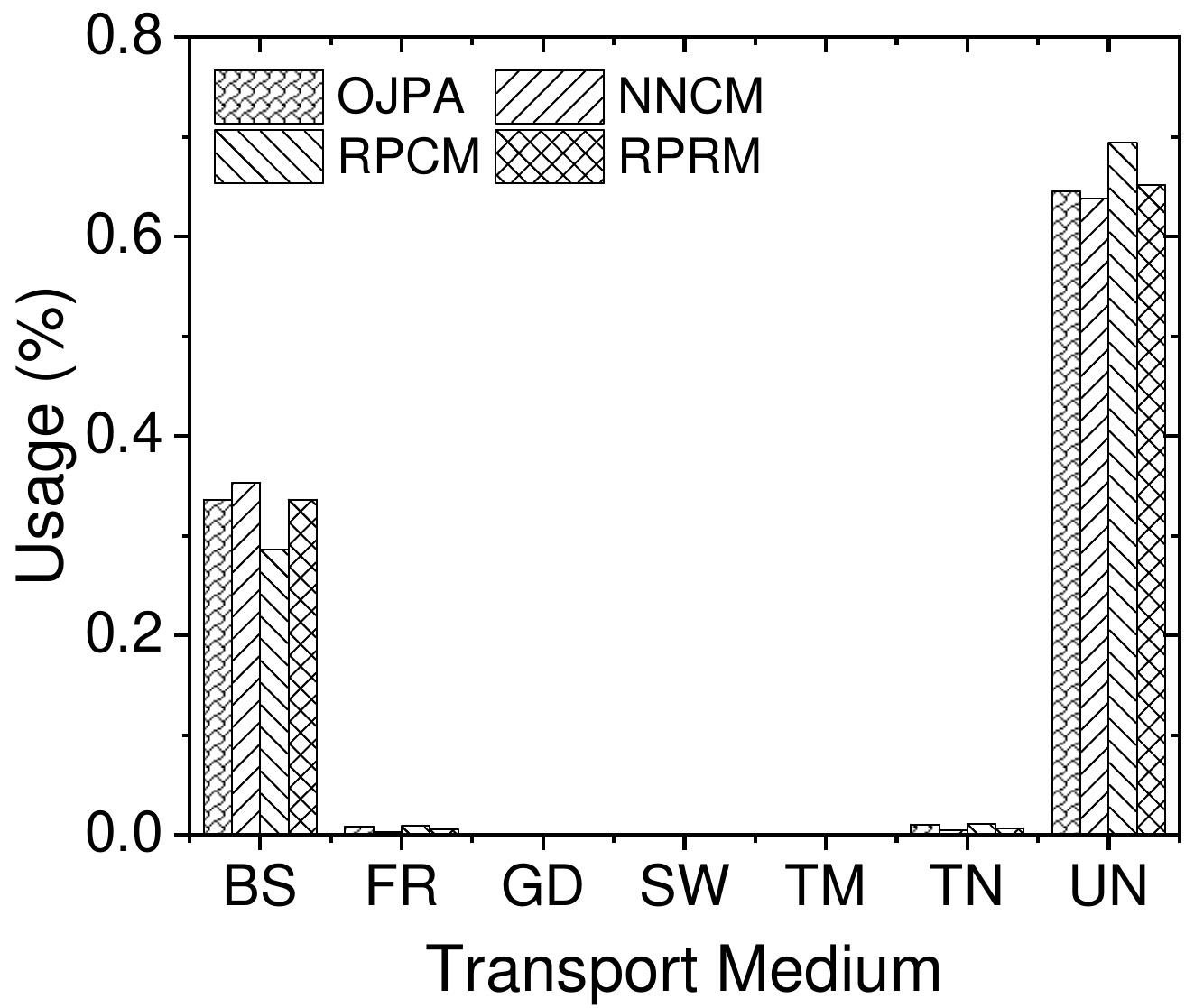}\hspace{-0.2em} & 
\includegraphics[width=0.25\linewidth]{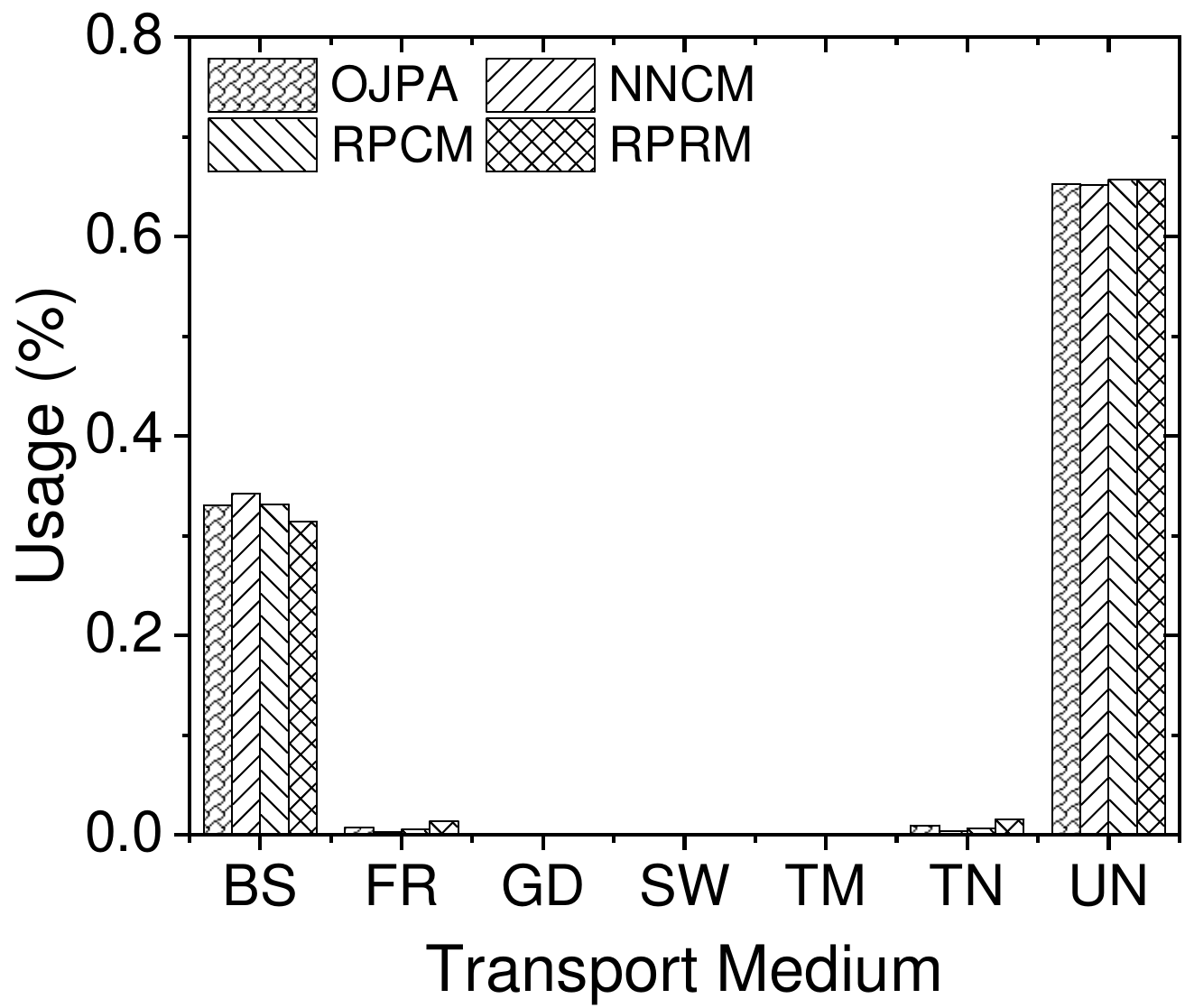}\hspace{-0.2em} & 
\includegraphics[width=0.25\linewidth]{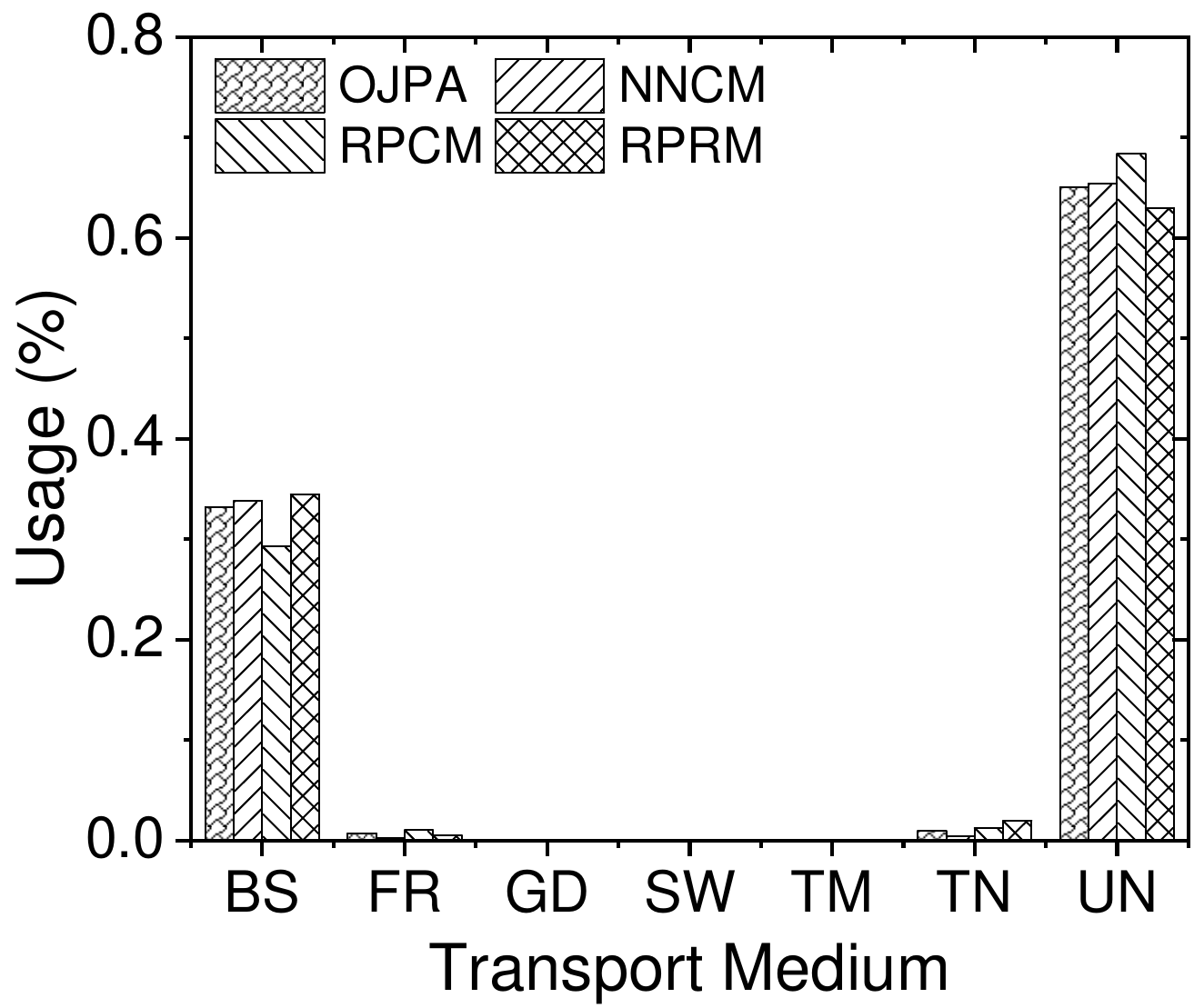} \\
\tiny{(a) $|\mathcal{U}| = 5$} &  \tiny{(b) $|\mathcal{U}| = 10$} & \tiny{(c) $|\mathcal{U}| = 20$} & \tiny{(d) $|\mathcal{U}| = 50$} \\
\includegraphics[width=0.25\linewidth]{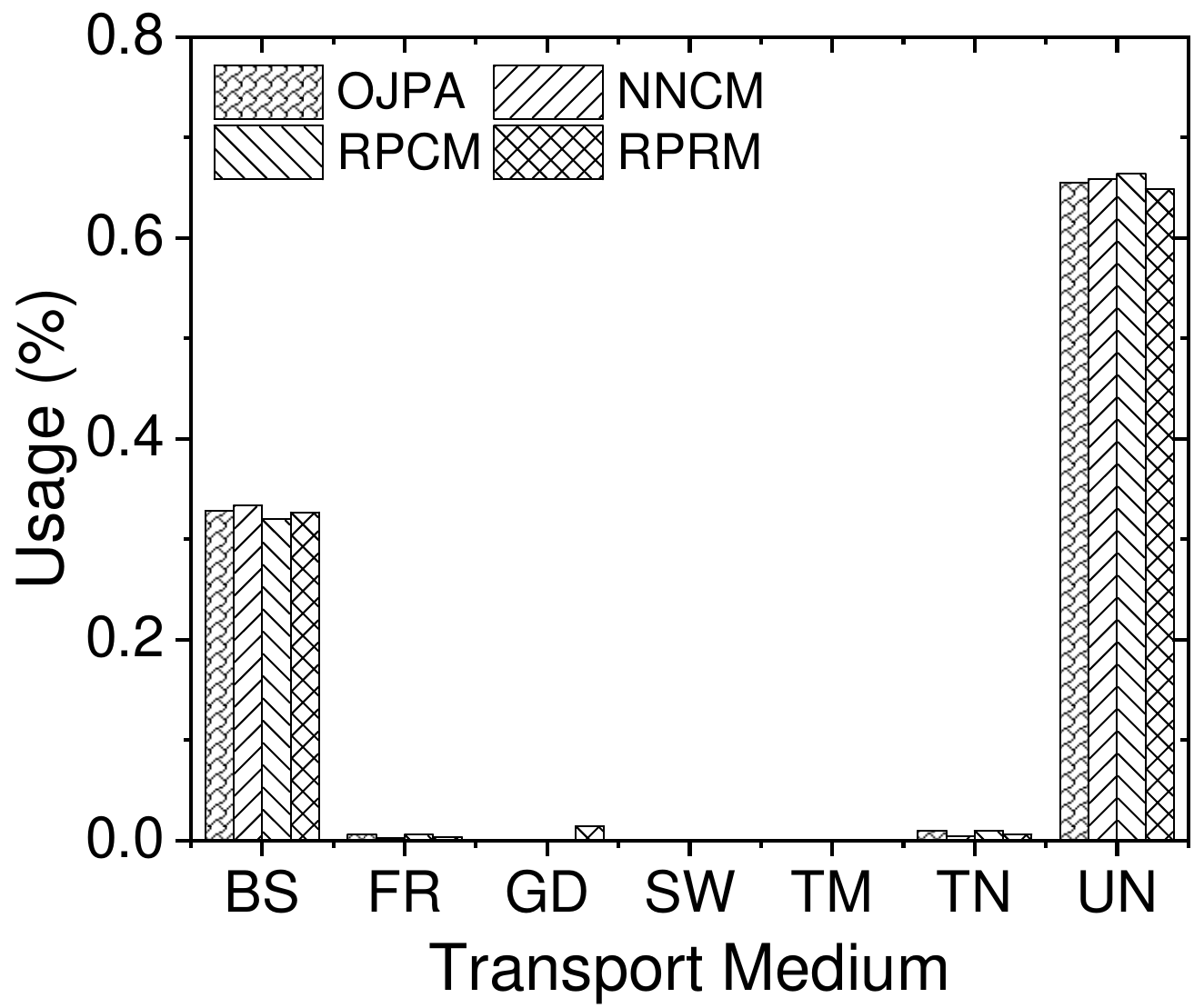}\hspace{-0.2em} &
\includegraphics[width=0.25\linewidth]{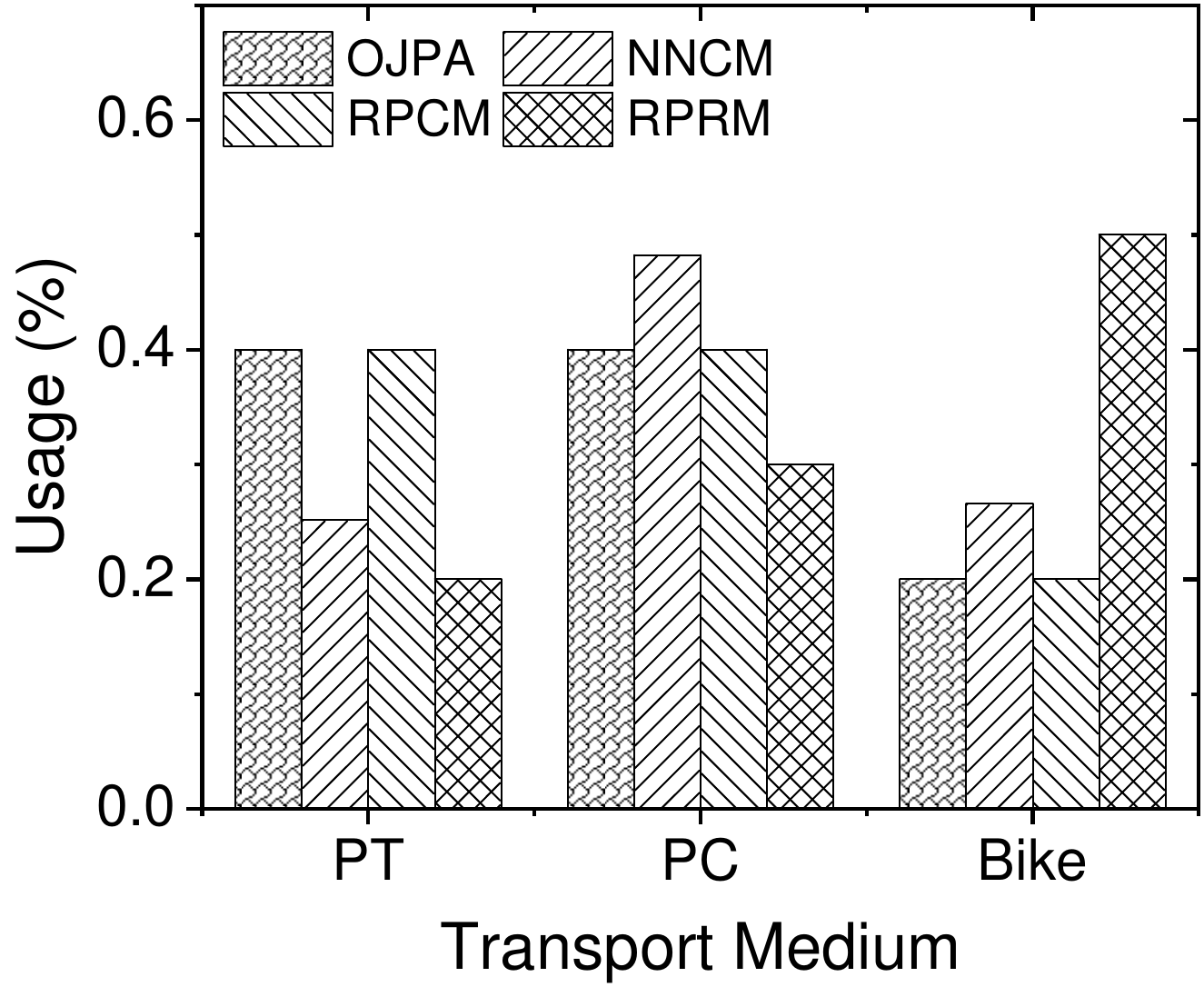}\hspace{-0.2em} &
\includegraphics[width=0.25\linewidth]{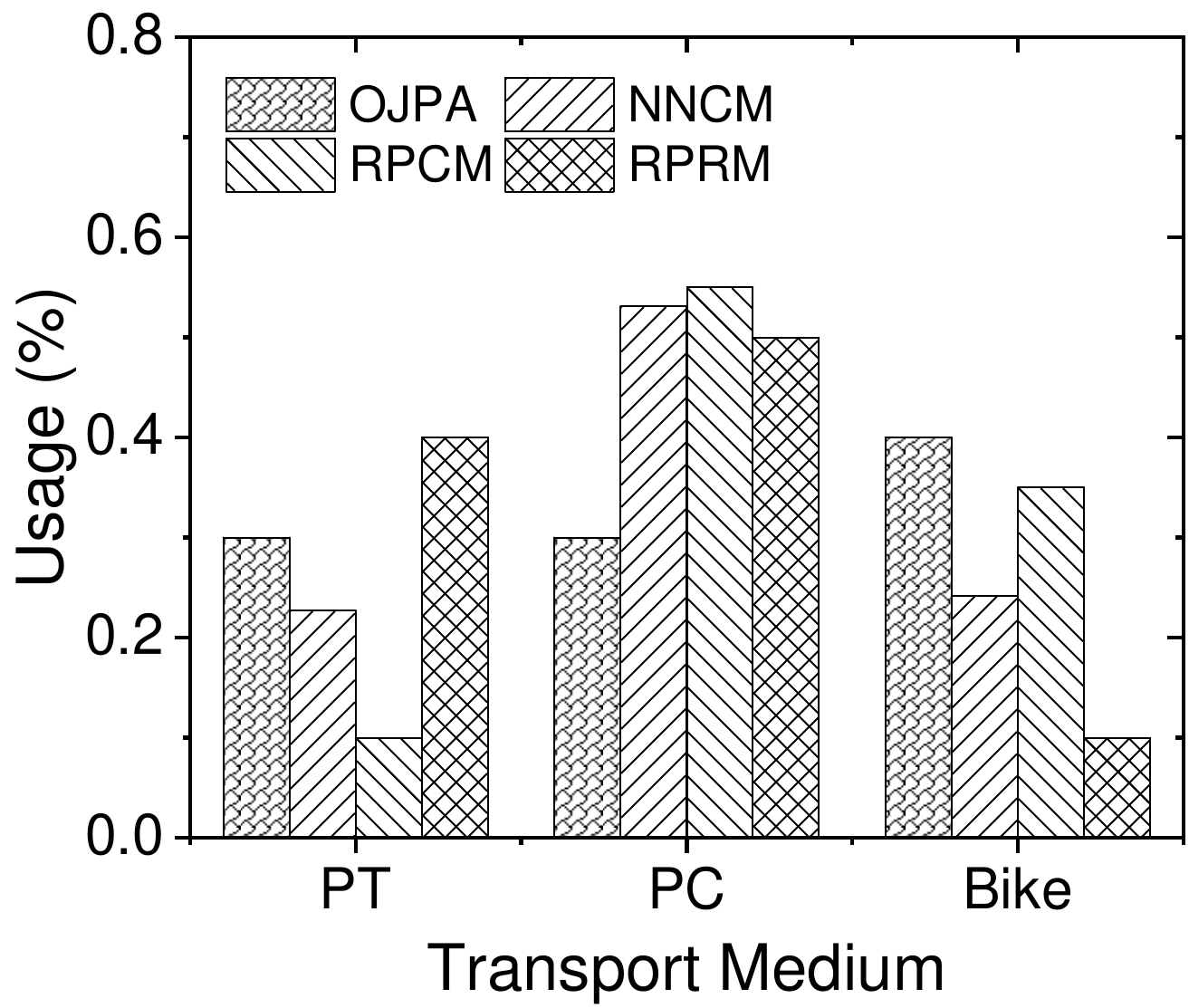}\hspace{-0.2em} &
\includegraphics[width=0.25\linewidth]{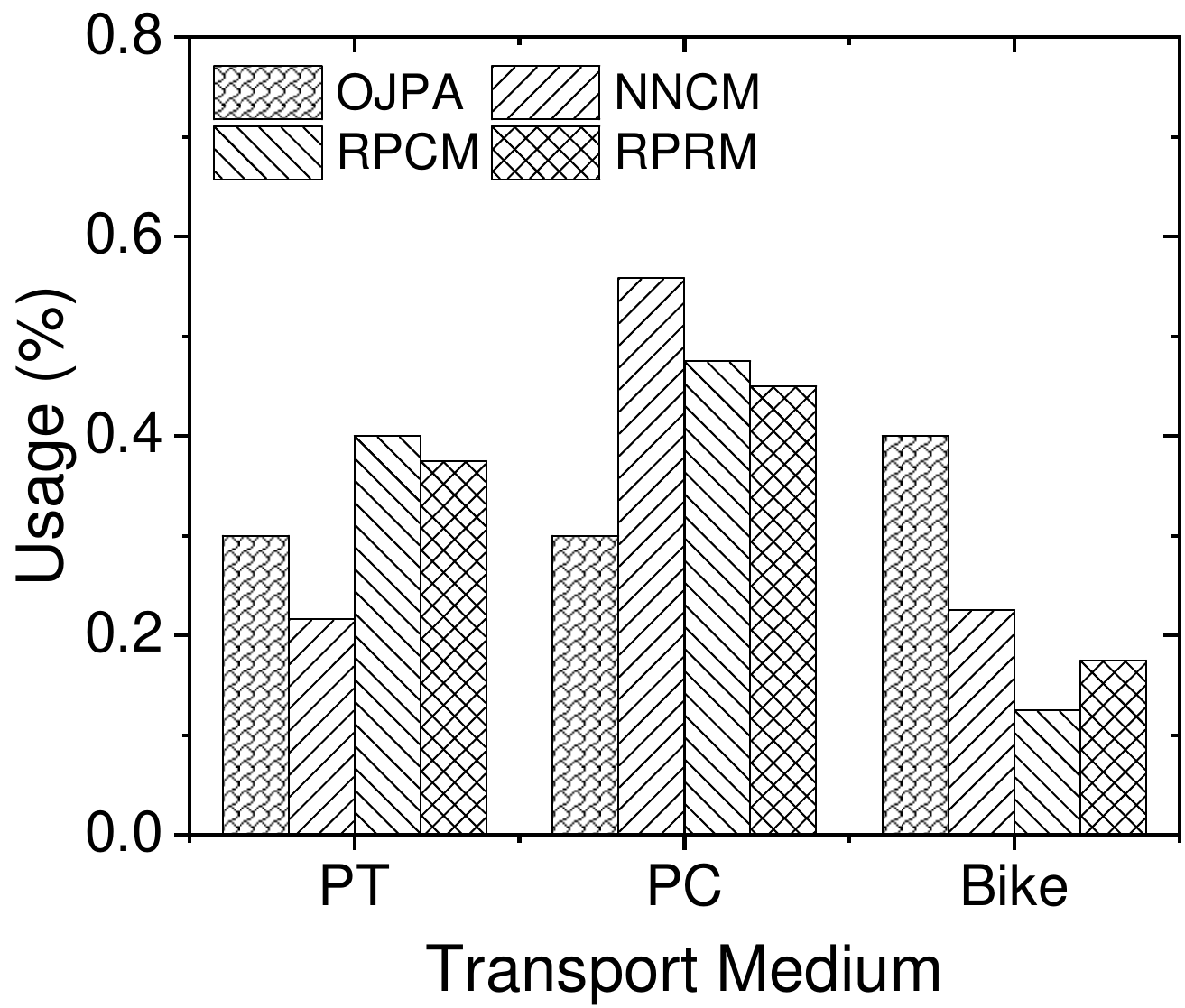}\\
\tiny{(e) $|\mathcal{U}| = 100$} & \tiny{(f) $|\mathcal{U}| = 5$} &  \tiny{(g) $|\mathcal{U}| = 10$} & \tiny{(h) $|\mathcal{U}| = 20$} \\
\includegraphics[width=0.25\linewidth]{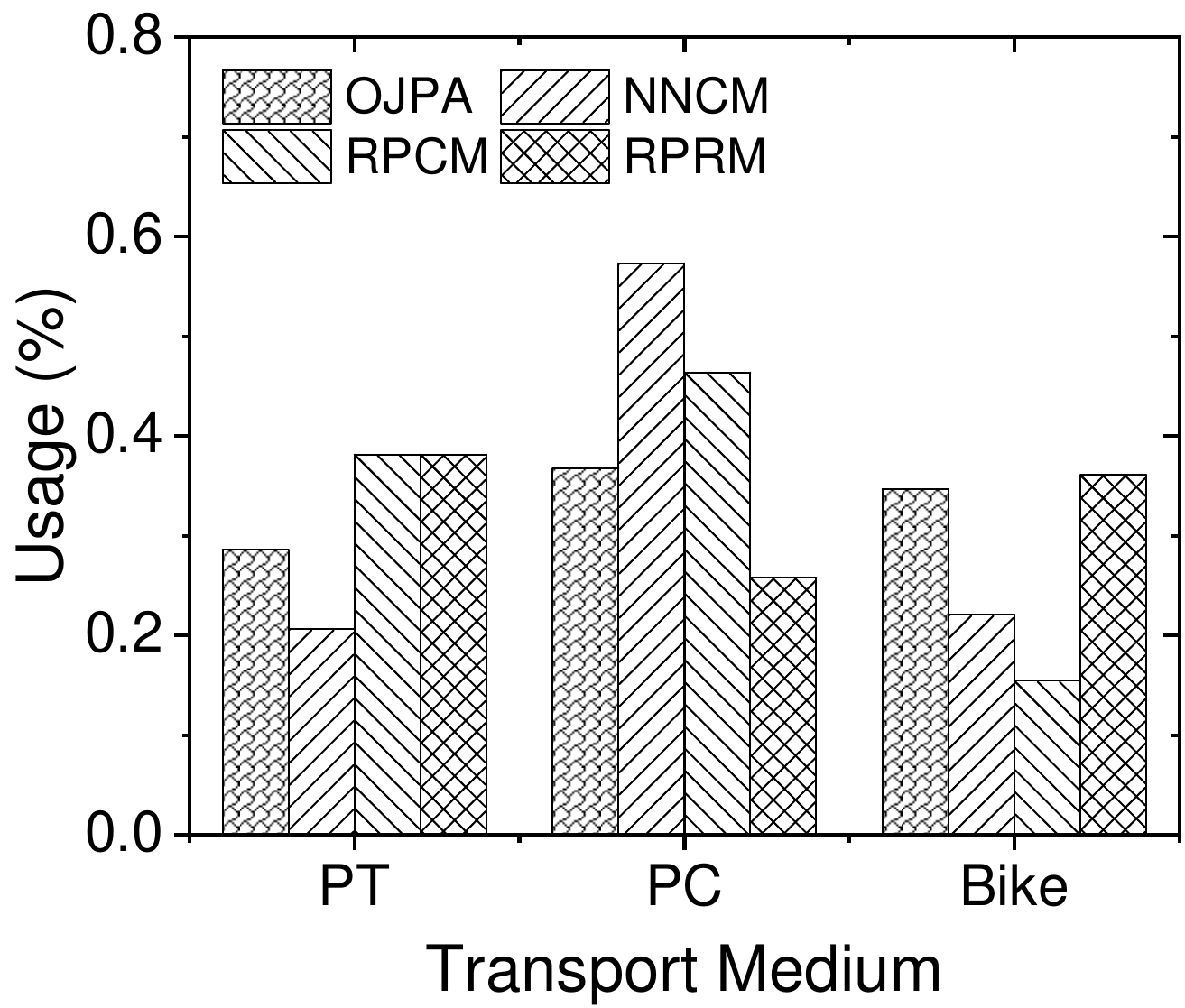}\hspace{-0.2em} &
\includegraphics[width=0.25\linewidth]{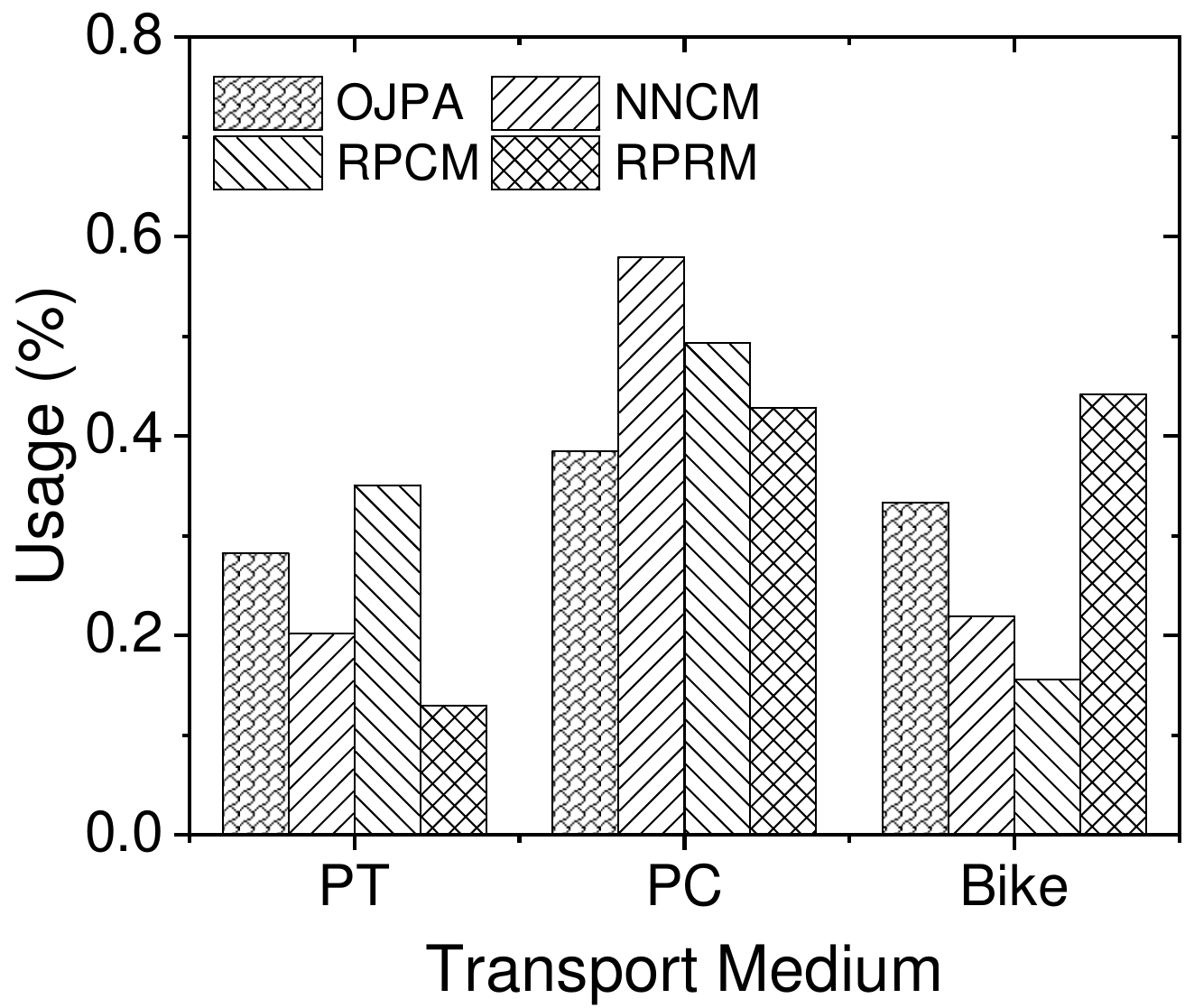}\hspace{-0.2em} &
\includegraphics[width=0.25\linewidth]{FA_Varying_PoI_Cost.pdf}\hspace{-0.2em} &
\includegraphics[width=0.26\linewidth]{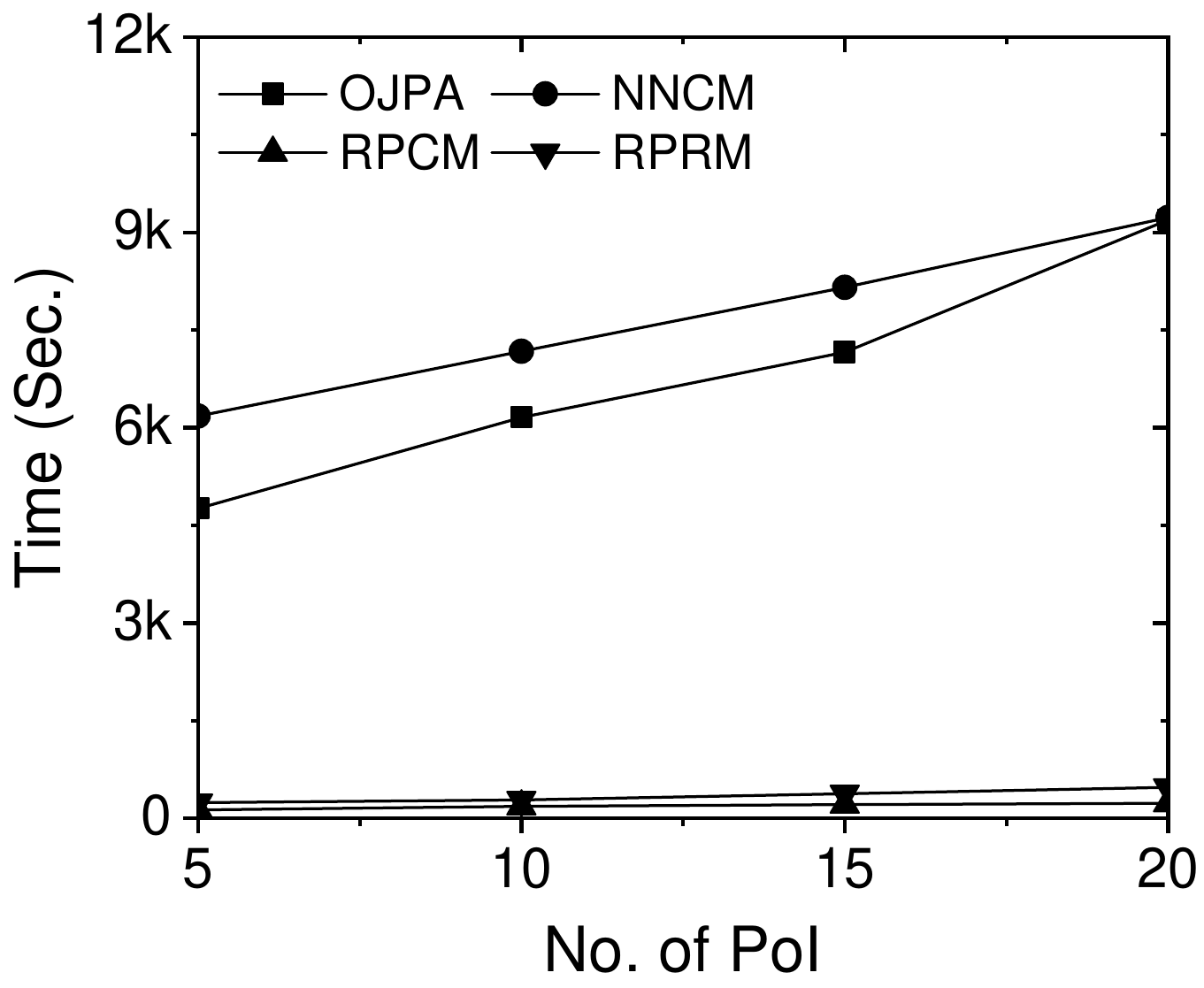}\\
\tiny{(i) $|\mathcal{U}| = 50$} & \tiny{(j) $|\mathcal{U}| = 100$}  & \tiny{(k) No. of PoI Vs. Cost}  & \tiny{$(\ell)$No. of PoI Vs. Time}
\end{tabular}
\caption{ Varying $|\mathcal{U}|$ Vs. Usage of transport medium in Switzerland $(a,b,c,d,e)$ and in Finland $(f,g,h,i,j)$, Varying No. of PoI Vs. Cost (k) and  Varying No. of PoI Vs. Time $(\ell)$  for Switzerland dataset}
\label{Fig:1Plot}
\end{figure*}

\begin{figure*}[h!]
\centering
\begin{tabular}{cccc}
 \includegraphics[width=0.25\linewidth]{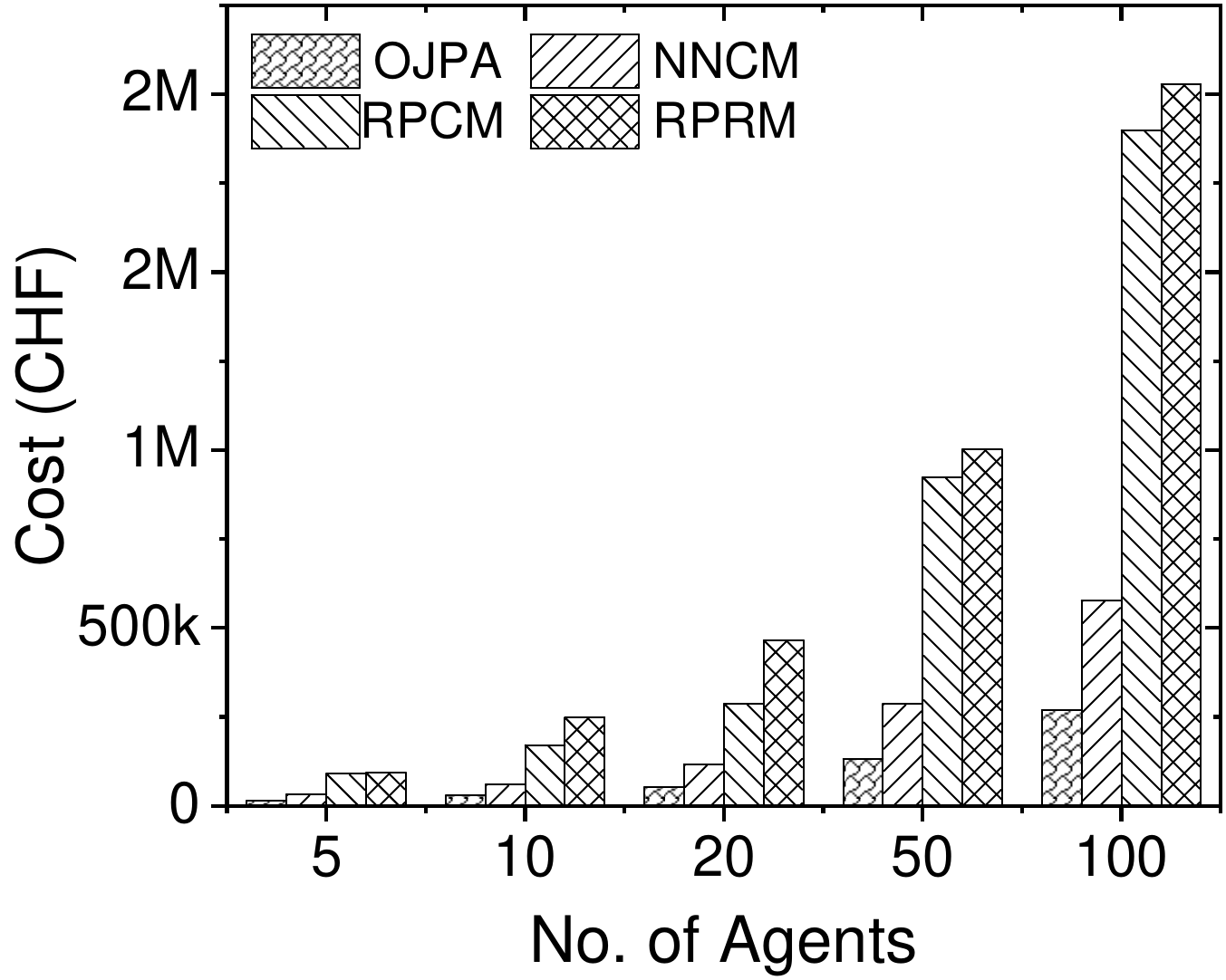}\hspace{-0.2em} &
\includegraphics[width=0.26\linewidth]{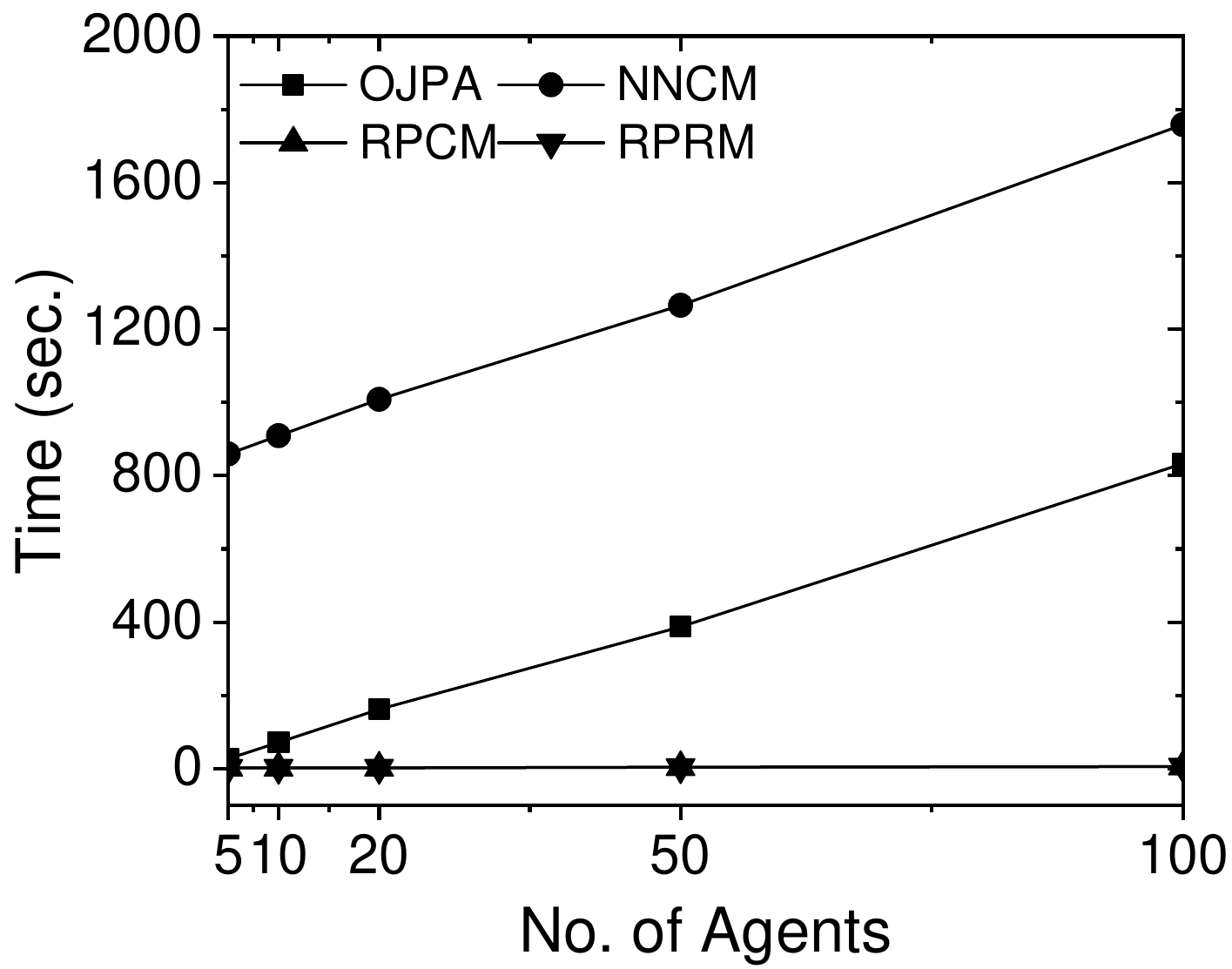}\hspace{-0.2em} &
\includegraphics[width=0.24\linewidth]{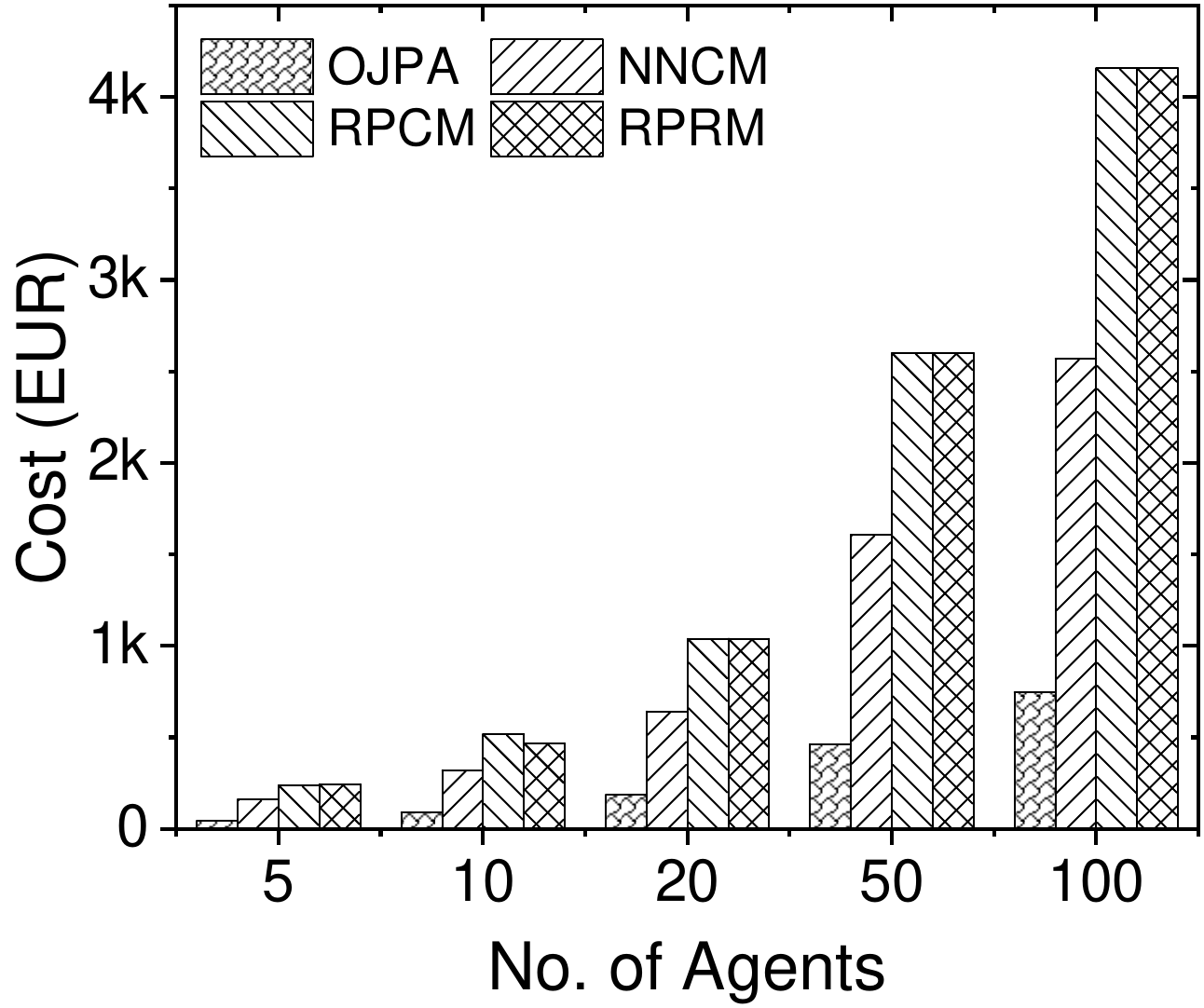}\hspace{-0.2em} &
\includegraphics[width=0.25\linewidth]{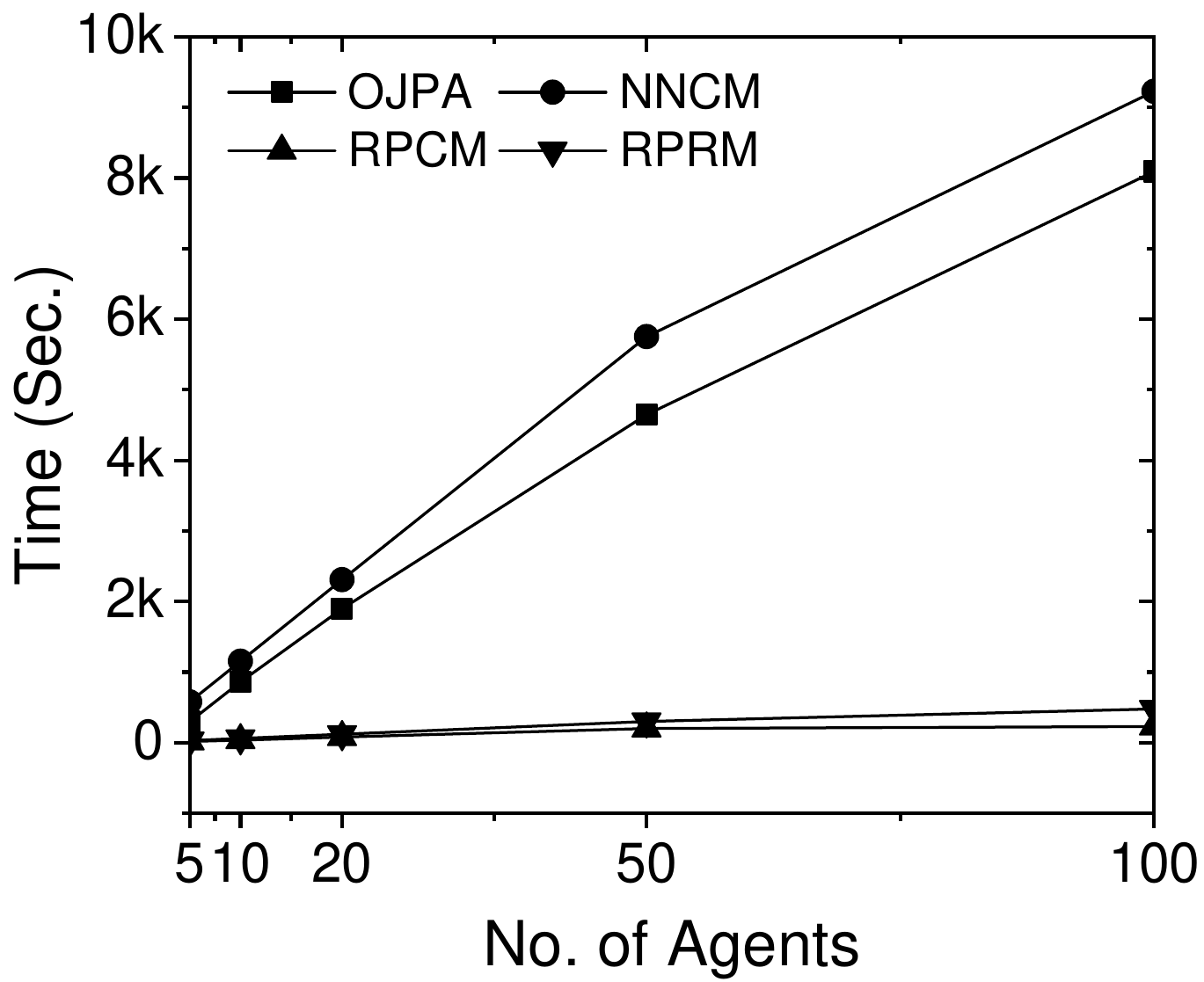} \\
\tiny{(a)  Varying $|\mathcal{U}|$ Vs. Cost}  & \tiny{(b) Varying $|\mathcal{U}|$ Vs. Time} & \tiny{(c)  Varying $|\mathcal{U}|$ Vs. Cost}  & \tiny{(d) Varying $|\mathcal{U}|$ Vs. Time} \\
\end{tabular}
\caption{ Varying $|\mathcal{U}|$ in Switzerland $(a,b)$ and in Finland $(c,d)$}
\label{Fig:2Plot}
\end{figure*}

\paragraph{\textsf{No. of Agents Vs. Travel Cost.}}
In this work, we vary the number of agents by $5$ to $100$ to evaluate the transport cost via the different journey mediums.In Figures \ref{Fig:2Plot}$(a)$ and \ref{Fig:2Plot}$(c)$, travel costs rise with the increasing number of agents, significantly driven by segments from the source to the first category and from the last category to the destinations. On the Finland dataset, the proposed `OJPA' consistently outperforms baseline methods (`NNCM', `RPCM', `RPRM') in terms of cost. Among the baselines, `NNCM' achieves the lowest cost by always selecting the nearest PoI and cheapest travel medium, whereas `RPRM' incurs the highest cost due to its random selection strategy. Similar observations are noted for the Switzerland dataset, as shown in Figure \ref{Fig:2Plot}$(c)$. For instance, on the Finland dataset, with $5$ agents, travel costs for `OJPA', `NNCM', `RPCM', and `RPRM' are $42.75$ € , $160.765$ €, $237.5$ €, and $244.6$ €, respectively. With $100$ agents, the costs rise to $745.75$ €, $2569.23$ €, $4156.25$ €, and $4156.25$ €, respectively. For the Switzerland dataset, as the number of agents increases from $5$ to $100$, the travel costs for `OJPA', `NNCM', `RPCM', and `RPRM' rise sharply from $30,701.25$ CHF, $68,861.25$ CHF, $308,353.75$ CHF, and $204,588.75$ CHF to $15,005,950$ CHF, $75,782,150$ CHF, $103,046,875$ CHF, and $70,440,000$ CHF, respectively. The substantially higher costs compared to the Finland dataset result from the lack of direct paths between PoIs in Switzerland's road network, whereas such direct connections are more commonly available in Finland.

\paragraph{\textsf{No. of Agents Vs. Time.}}
To determine the computational time required, we vary the number of agents and observe the run time is proportional to a varying number of agents. We have compared our proposed approach with baseline methods like `NNCM,' `RPCM,' and `RPRM'; among them, `NNCM' takes more time than the other baseline methods. This happens because `NNCM', finds the nearest PoI from one category to the PoIs from another. It is also considered the cheapest medium of the journey, which leads to huge computational time. However, the `RPCM' and `RPRM' both methods use randomization to select PoIs from one category to another, leading to very less runtime, and it is negligible compared to the other methods. These observations are consistent with both Switzerland as well as Finland datasets, as shown in Figure \ref{Fig:2Plot}$(b)$ and \ref{Fig:2Plot}$(d)$, respectively. For example, in the Switzerland dataset, when we vary agents from $5$ to $100$, the runtime for `OJPA', `NNCM', `RPCM', and `RPRM' varies from $24,25,2,2$ to $546,605,58,53$ in seconds, respectively. 

\paragraph{\textsf{No. of Agents Vs. Medium Usage.}}
Figure \ref{Fig:1Plot}$(a)$ to \ref{Fig:1Plot}$(j)$ illustrates how the usage of different transport mediums varies with the number of agents in the Switzerland dataset, considering mediums such as bus (BS), ferry (FR), gondola (GD), subway (SW), tram (TM), and train (TN). As the number of agents increases, the usage of transport mediums also rises, but only bus, ferry, gondola, and tram are utilized across both proposed and baseline methods. Among these, the bus is consistently the most frequently used medium, whereas the gondola is the least utilized. Figures \ref{Fig:1Plot}$(a)$–$(e)$ illustrate the average usage of travel mediums, clearly indicating that the Unknown (UN) medium has the highest usage. This occurs because the original road network lacks sufficient connectivity, requiring additional edges to form a connected graph. Agents frequently select these added edges for their optimal paths. The proposed `OJPA' and baseline `NNCM' methods show similar usage patterns for different travel mediums, whereas the random selection in `RPCM' and `RPRM' results in significantly varied medium usage. The Finland dataset consists of three journey modes: public transport (PT), private car (PC), and bike, and they are used very frequently. Public transport and private cars are the most used medium, as reported in Figure \ref{Fig:1Plot}$(f,g,h,i,j)$. We have observed that with the increase in the number of agents, the usage of private cars as a medium increases and public transport usage decreases. These observations are consistent with the Switzerland dataset.

\paragraph{\textsf{Additional Discussion.}}
Additionally, we have experimented with varying numbers of PoIs in each category to check the run time and usage of different mediums of the journey. We fixed the number of agents as $100$ and varied the number of PoIs in each category by $5, 10, 15,$, and $20$, and the experimental results are reported in Figure \ref{Fig:1Plot}$(k, \ell)$. We have observed that the computational time increases rapidly with the increased number of agents. Now, in the case of medium usage, minor changes happen in the case of the `OJPA' and `NNCM' approaches. However, major changes in usage occur in the `RPCM', and `RPRM'. We have also varied the number of intermediate categories by $5,10,20$ and observed that with an increasing number of categories, the computational cost increases rapidly. One point needs to be noted in the Switzerland and Finland datasets: we have considered an equal number of PoIs in each category, including the source and destination categories for all our experiments as default settings.

\section{Concluding Remarks} \label{Sec:CFD}
In this paper, we studied the GTP query problem considering multiple commuting transport mediums, a direction previously unexplored to the best of our knowledge. We demonstrated that increasing the number of PoI categories makes the problem computationally intractable. We proposed a dynamic programming-based solution and analyzed its time and space complexity to address this. Extensive experiments on publicly available benchmark datasets revealed that our method consistently outperforms several baseline approaches. An important future research direction includes addressing fairness concerns, as minimizing the aggregated group cost may disproportionately affect individual agents.

\bibliographystyle{splncs04}
\bibliography{Paper}
\end{document}